\newtheorem{theorem}{Theorem}
\newtheorem{lemma}{Lemma}
\title{Stochastic recursive inclusion in two timescales with an
application to the Lagrangian dual problem}
 \author[1]{
Arun Selvan. R
}
 \author[2]{
 Shalabh Bhatnagar
 }
\affil[1]{\texttt{arunselvan@csa.iisc.ernet.in}}
\affil[2]{\texttt{shalabh@csa.iisc.ernet.in}}
\affil[1,2]{Department of Computer Science and Automation,
 Indian Institute of Science,
 Bangalore - 560012, India.}
\begin{document}

\date{}
\maketitle
\begin{abstract}
In this paper we present a framework to analyze the asymptotic behavior of two timescale stochastic approximation
 algorithms including those with set-valued mean fields. This paper builds on the works of 
 Borkar and Perkins \& Leslie.
 The framework presented herein is more general as compared to the 
 synchronous two timescale framework of
 Perkins \& Leslie, however the assumptions involved are easily verifiable.
 As an application, we use this framework to analyze the
 two timescale stochastic approximation algorithm corresponding to the
 Lagrangian dual problem in optimization theory.
% The framework presented in this paper builds on the one developed by Perkins and Leslie
% and is a natural generalization of the one developed by Borkar. 
% Perkins and Leslie developed a framework for asynchronous coupled stochastic approximation (SA) algorithms
% with set-valued mean fields and presented applications of the same. 
% The aim of our paper is to show that the assumptions involved in the ``synchronous'' coupled SA algorithms
% can be significantly weakened. Specifically, the key assumption
% that dictates the behavior of the faster timescale iterates
% as a function of the slower iterates is weakened while at the same time making
% it more verifiable.

\end{abstract}

\section{Introduction}
The classical dynamical systems approach was developed by Bena\"{i}m \cite{Benaim96, Benaim99} and 
Bena\"{i}m and Hirsch \cite{BenaimHirsch}. They showed that the asymptotic behavior of a stochastic
approximation algorithm $(SA)$ can be studied by analyzing the asymptotics of the associated ordinary 
differential equation $(o.d.e.)$. This method is popularly known as the $o.d.e.$ \textit{method} and was 
originally introduced by Ljung \cite{Ljung77}.
In 2005, Bena\"{i}m, Hofbauer and Sorin \cite{Benaim05} extended the dynamical systems approach to include
the situation where the stochastic approximation algorithm tracks a solution to the associated
differential inclusion. Such algorithms are called \textit{stochastic recursive inclusions}.
For a detailed exposition on $SA$, the reader is referred to books by Borkar \cite{BorkarBook}
and Kushner and Yin \cite{KushnerYin}.
\paragraph{}
There are many applications where the aforementioned paradigms are inadequate. For example,
the right hand side of a $SA$ may require further averaging or an additional 
recursion to evaluate it.
An instance mentioned in Borkar \cite{Borkartt} is the `adaptive heuristic
critic' approach to reinforcement learning \cite{keerthi} that requires a stationary value iteration 
executed between two policy iterations. To solve such problems,
Borkar \cite{Borkartt} analyzed the two timescale $SA$ algorithms.
The two timescale paradigm presented in Borkar \cite{Borkartt} is inadequate
if the coupled iterates are stochastic recursive inclusions. 
Such iterates arise naturally in many learning algorithms, see for instance \textit{Section 5} of
\cite{Perkins}. For another application from convex optimization the reader is referred to
Section~\ref{Lagrange} of this paper.
Such iterates also arise in applications that involve projections onto non-convex sets.
The first attempt
at tackling this problem was made by Perkins and Leslie \cite{Perkins} in 2012.
They extended the two timescale scheme of Borkar \cite{Borkartt}
to include the situation when the two iterates track solutions to
differential inclusions.
\paragraph{}
Consider the following coupled recursion:
\begin{equation}\label{twotimescale}
\begin{split}
x_{n+1} = x_{n} + a(n) \left[ u_{n} + M^{1}_{n+1} \right], \\
y_{n+1} = y_{n} + b(n) \left[ v_{n} + M^{2}_{n+1} \right],
\end{split}
\end{equation}
where
$u_{n} \in h(x_{n}, y_{n})$, $v_{n} \in g(x_{n}, y_{n})$, $h: \mathbb{R}^{d+k} \to 
\left\{ subsets \ of\ \mathbb{R}^{d} \right\}$ and $g: \mathbb{R}^{d+k} \to \left\{ subsets\ of\ \mathbb{R}^{k} \right\}$.
Such iterates were analyzed in \cite{Perkins}. Further,
as an application a Markov decision process (MDP) based actor critic 
type learning algorithm was also presented in \cite{Perkins}.
\paragraph{}
In this paper we generalize the synchronous two timescale stochastic approximation
scheme presented in \cite{Perkins}.
We present sufficient conditions that are mild and easily verifiable.
For a complete list of assumptions
used herein, the reader is referred to Section~\ref{assumptions} and for the analyses
under these conditions
the reader is referred to Section~\ref{convergenceproof}.
It is worth noting that 
the analysis of the faster timescale proceeds in a predictable manner,
however,
 the \textit{analysis of the slower timescale presented herein is new
to the literature to the best of our knowledge}.
\paragraph{}
In convex optimization, one is interested in minimizing an objective function (that
is convex) subject to a few constraints. A solution to this optimization problem
is a set of vectors that minimize our objective function. 
Often this set is referred to as a minimum set. 
In Section~\ref{Lagrange}, we analyze the two timescale $SA$ algorithm
corresponding to the Lagrangian dual of a primal problem. 
As we shall see later, this analysis considers a family of minimum sets and as a 
consequence of our framework these minimum sets are no longer required to be \textit{singleton}.
In \cite{dantzig}, Dantzig, Folkman and Shapiro  presented sufficient conditions for the continuity
of minimum sets of continuous functions. We shall use results from that paper to show that
under some standard convexity conditions the assumptions of Section~\ref{assumptions}
are satisfied. We then conclude from our main result, Theorem~\ref{main}, that
the two timescale algorithm in question converges to a solution to the dual problem.
% In this paper, we show that
% the assumptions involved in the ``synchronous version'' of stochastic approximation algorithms 
% with two timescales (presented in \cite{Perkins}) can be significantly weakened. In
% doing so we develop a generalized framework, while at the same
% time ensuring that our assumptions are not only weaker but also verifiable
% (see the end of Section~\ref{assumptions} for a detailed discussion). 
% Further, it is worth
% noting that although the analysis of the ``faster timescale iterates'' proceeds in a predictable
% manner the analysis of the ``slower timescale iterates'' is new to the literature
% to the best of our knowledge.
%%%%%%%%%%%%%%%%%%%%%%%%%%%%%%%%%%%%%%%%%%%%%%%%%%%%%%%%%%%%%%%%%%%%%%%%%%%%%%%%%%%%%%%%
%%%%%%%%%%%%%%%%%%%%%%%%%%%%%%%%%%%%%%%%%%%%%%%%%%%%%%%%%%%%%%%%%%%%%%%%%%%%%%%%%%%%%%%%
\section{Preliminaries and assumptions}
\subsection{Definitions and notations} \label{definition}
%We use notations and terminology from Borkar
% Consider the following coupled recursion where
% the $x$-iteration is in $\mathbb{R}^{d}$ and the $y$-iteration is in $\mathbb{R}^{k}$.
% \begin{equation}\label{twotimescale}
% \begin{split}
% x_{n+1} = x_{n} + a(n) \left[ u_{n} + M^{1}_{n+1} \right] \\
% y_{n+1} = y_{n} + b(n) \left[ v_{n} + M^{2}_{n+1} \right]
% \end{split}
% \end{equation}
% ,where
% $u_{n} \in h(x_{n}, y_{n})$, $v_{n} \in g(x_{n}, y_{n})$, $h: \mathbb{R}^{d+k} \to 
% \left\{ subsets \ of\ \mathbb{R}^{d} \right\}$ and $g: \mathbb{R}^{d+k} \to \left\{ subsets\ of\ \mathbb{R}^{k} \right\}$.
The definitions and notations used in this paper are similar to
those in Bena\"{i}m et. al. \cite{Benaim05},
Aubin et. al. \cite{Aubin} and Borkar \cite{BorkarBook}. We present a few for easy reference.
\paragraph{}
Let $H$ be an upper semi-continuous, set-valued map on $\mathbb{R}^d$, where
for any $x \in \mathbb{R}^d$, $H(x)$ is compact and
convex valued. Note that we say that $H$ is upper semi-continuous when $x_n \to x$, 
$y_n \to y$ and $y_n \in H(x_n)$ $\forall n$
implies $y \in H(x)$.
Consider the differential inclusion (DI)
\begin{equation} \label{di}
\dot{x} \ \in \ H(x).
\end{equation}
We say that $\textbf{x} \in \sum$ if $\textbf{x}$ 
is an absolutely continuous map that satisfies (\ref{di}).
The \textit{set-valued semiflow}
$\Phi$ associated with (\ref{di}) is defined on $[0, + \infty) \times \mathbb{R}^d$ as:
$\Phi_t(x) = \{\textbf{x}(t) \ | \ \textbf{x} \in \sum , \textbf{x}(0) = x \}$. Let
$\mathcal{T} \times M \subset [0, + \infty) \times \mathbb{R}^k$ and define
\begin{equation}\nonumber
 \Phi_\mathcal{T}(M) = \underset{t\in \mathcal{T},\ x \in M}{\bigcup} \Phi_t (x).
\end{equation}
\indent
$M \subseteq \mathbb{R}^d$ is \textit{invariant} if for every $x \in M$ there exists 
a complete trajectory in $M$, say $\textbf{x} \in \sum$  with $\textbf{x}(0) = x$.
\\ \indent
Let $x \in \mathbb{R}^d$ and $A \subseteq \mathbb{R}^d$, then
$d(x, A) : = \inf \{\lVert a- y \rVert \ | \ y \in A\}$. We define the $\delta$-\textit{open neighborhood}
of $A$ by $N^\delta (A) := \{x \ |\ d(x,A) < \delta \}$. The 
$\delta$-\textit{closed neighborhood} of $A$ 
is defined by $\overline{N^\delta} (A) := \{x \ |\ d(x,A) \le \delta \}$.
\\ \indent
Let $M \subseteq \mathbb{R}^d$, the $\omega-limit$ \textit{set} be given by
$
 \omega_{\Phi}(M) := \bigcap_{t \ge 0} \ \overline{\Phi_{[t, +\infty)}(M)}.
$
Similarly the \textit{limit set} of a solution $\textbf{x}$ is given by
$L(x) = \bigcap_{t \ge 0} \ \overline{\textbf{x}([t, +\infty))}$.
\\ \indent
$A \subseteq \mathbb{R}^d$ is an \textit{attractor} if it is compact, invariant
and there exists a neighborhood $U$ such that for any $\epsilon > 0$,
$\exists \ T(\epsilon) \ge 0$ such that $\Phi_{[T(\epsilon), +\infty)}(U) \subset
N^{\epsilon}(A)$. Such a $U$ is called the \textit{fundamental neighborhood} of $A$. The \textit{basin
of attraction } of $A$ is given by $B(A) = \{x \ | \ \omega_\Phi(x) \subset A\}$.
If $B(A) = \mathbb{R}^d$, then the set is called a \textit{globally attracting set}. It is called
\textit{Lyapunov stable} if for all $\delta > 0$, $\exists \ \epsilon > 0$ such that
$\Phi_{[0, +\infty)}(N^\epsilon(A)) \subseteq N^\delta(A)$.
\\ \indent
A set-valued map $h: \mathbb{R}^n \to \{subsets\ of\ \mathbb{R}^m$\} 
is called a \textit{Marchaud map} if it satisfies
the following properties:
\begin{itemize}
 \item[(i)] For each $z$ $\in \mathbb{R}^{n}$, $h(z)$ is convex and compact.
 \item[(ii)] \textit{(point-wise boundedness)} For each $z \in \mathbb{R}^{n}$,  
 $\underset{w \in h(z)}{\sup}$ $\lVert w \rVert$
 $< K \left( 1 + \lVert z \rVert \right)$ for some $K > 0$.
 \item[(iii)] $h$ is an \textit{upper semi-continuous} map. 
\end{itemize}
\indent The open ball of radius $r$ around $0$ is represented by $B_r(0)$,
while the closed ball is represented by $\overline{B}_r(0)$.
% \\ \indent
% Let $\{K_{n}\}_{n \ge 1}$ be a sequence of sets in $\mathbb{R}^{d}$. 
% The \textit{lower limit} of $\{K_{n} \}_{n \ge 1}$
% is given by
% $Liminf_{n \to \infty} K_{n}$ $:=$ $\{x \ |\ 
%  \underset{n \to \infty}{\lim} d(x, K_{n}) = 0 \}$ while the
%  \textit{upper limit} of the sequence is given by
% $Limsup_{n \to \infty} K_n$ 
% $:= \ \{y \ | \ \underset{n \to \infty}{\underline{lim}}d(y, K_n)= 0 \}$.
%%%%%%%%%%%%%%%%%%%%%%%%%%%%%%%%%%%%%%%%%%%%%%%%%%%%%%%%%%%%%%%%%%%%%%%%%%%%%%%%%%%%%%%
\subsection{Assumptions}\label{assumptions}
Recall that we have the following coupled recursion:
\begin{equation}\nonumber
\begin{split}
x_{n+1} = x_{n} + a(n) \left[ u_{n} + M^{1}_{n+1} \right], \\
y_{n+1} = y_{n} + b(n) \left[ v_{n} + M^{2}_{n+1} \right],
\end{split}
\end{equation}
where
$u_{n} \in h(x_{n}, y_{n})$, $v_{n} \in g(x_{n}, y_{n})$, $h: \mathbb{R}^{d+k} \to 
\left\{ subsets \ of\ \mathbb{R}^{d} \right\}$ and $g: \mathbb{R}^{d+k} \to \left\{ subsets\ of\ \mathbb{R}^{k} \right\}$.
\\ \indent
We list below our assumptions.
\begin{itemize}
 \item[(A1)] $h$ and $g$ are \textit{Marchaud maps}.
  \item[(A2)] $\{ a(n) \}_{n \ge 0}$ and $\{ b(n) \}_{n \ge 0}$ are two scalar sequences
  such that: \\ $a(n), b(n) > 0$, for all $n$, $\underset{n \ge 0}{\sum} \left(a(n) + b(n) \right) = \infty$,
 $\underset{n \ge 0}{\sum} \left( a(n)^{2} + b(n)^2 \right) < \infty$
 and $ \lim_{n \to \infty} \frac{b(n)}{a(n)} = 0$. 
 Without loss of generality, we let
 $\sup_n \ a(n),\ \sup_n \ b(n) \le 1$.
 \item[(A3)]$\{ M^{i}_{n}\}_{n \ge 1}$, $i = 1,2$, are square integrable martingale difference 
 sequences with respect to
 the filtration
 $\mathcal{F}_{n}$ $:=$ $ \sigma$ $\left( x_{m}, y_{m}, M^1 _{m}, M^2_{m}: \ m \le n \right)$, 
$n \ge 0$, such that
 $E[\lVert M^i _{n+1} \rVert ^{2} | \mathcal{F}_{n}]$ $\le$ $K 
 \left( 1 + \left( \lVert x_{n} \rVert  + \lVert y_{n} \rVert \right)^{2} \right)$, $i = 1, 2$, 
 for some constant $K > 0$. Without loss of generality assume that the same constant, $K$,
 works for both $(A1)$ (in the property $(ii)$ of Marchaud maps, see section~\ref{definition})
 and $(A3)$.
 \item[(A4)] $\sup_{n} \left\{ \lVert x_{n} \rVert + \lVert y_{n} \rVert \right\} < \infty$ $a.s.$
 \item[(A5)] For each $y \in \mathbb{R}^k$, the differential inclusion
 $\dot{x}(t) \in h(x(t), y)$ has a globally attracting set, $A_{y}$, that is also Lyapunov stable.
 Further, $\underset{x \in A_y}{\sup} \lVert x \rVert \le K \left(1 + \lVert y \rVert \right)$. The set-valued
 map $\lambda: \mathbb{R}^k \to \{subsets\ of\ \mathbb{R}^d \}$, where $\lambda(y) = A_y$, is
 upper semi-continuous.
\end{itemize}
 Define for each $y \in \mathbb{R}^{k}$, a function $G(y)\ := \overline{co}$
 $\left( \underset{x \in \lambda(y)}{\bigcup} g(x,y) \right)$.
The convex closure of a set $A \subseteq \mathbb{R}^k$, denoted by $\overline{co}(A)$, is
 closure of the convex hull of $A$, $i.e.,$ the 
closure of the smallest convex set containing $A$. It will be shown later that $G$ is a Marchaud map.

\begin{itemize}
 \item[(A6)] $\dot{y}(t) \in G(y(t))$ has a globally attracting set, $A_0$, that is also Lyapunov
 stable.
\end{itemize}
\indent
With respect to the faster timescale, the slower timescale iterates appear stationary, hence
the faster timescale iterates track a solution to $\dot{x}(t) \in h(x(t), y_0)$, where $y_0$ is fixed
(see Theorem~\ref{fastx}).
The $y$ iterates track a solution to $\dot{y}(t) \in G(y(t))$
(see Theorem~\ref{slowy}). 
It is worth noting that Theorems~\ref{fastx} \& \ref{slowy} only require $(A1)-(A5)$ to hold.
Since $G(\cdot)$ is the convex closure of a union of compact convex sets one can expect
the set-valued map to be point-wise bounded and convex. However, it is unclear why it should be
upper semi-continuous (hence Marchaud). In lemma~\ref{gismarchaud} we prove that
$G$ is indeed Marchaud without any additional assumptions.
%%%%%%%%%%%%%%%%%%%%%%%%%%%%%%%%%%%%%%%%%%%%%%%%%%%%%%%%%%%%%%%%%%%%%%%%%%%
%%%%%%%%%%%%%%%%%%%%%%%%%%%%%%%%%%%%%%%%%%%%%%%%%%%%%%%%%%%%%%%%%%%%%%%%%%%
\paragraph{}
Over the course of this paper we shall see that
$(A5)$ is the key assumption that links the asymptotic
behaviors of the faster and slower timescale iterates. 
It may be noted that $(A5)$ is 
\textit{weaker} than the corresponding assumption - $(B6)/(B6)'$ used in \cite{Perkins}.
For example, $(B6)'$ requires that $\lambda(y)$ and $\underset{x \in \lambda(y)}{\cup} g(x,y)$
be \textit{convex} for every $y \in \mathbb{R}^k$
while $(B6)$ requires that $\lambda(y)$ be singleton for every $y \in \mathbb{R}^k$. 
The reader is referred to
\cite{Perkins} for more details. 
Note that $\lambda(y)$ being a singleton is a strong requirement in itself 
since it is the
global attractor of some $DI$.
It is observed in most applications that both $\lambda(y)$ and $\underset{x \in \lambda(y)}{\cup} g(x,y)$
will not be convex and therefore $(B6)/(B6)'$ are easily violated. Further,
our application discussed in Section~\ref{Lagrange} illustrates the same.
\section{Proof of convergence}\label{convergenceproof}
Before we start analyzing the coupled recursion given by (\ref{twotimescale}),
we prove a bunch of auxiliary results.
\begin{lemma} \label{compactsasfn}
 Consider the differential inclusion $\dot{x}(t) \in H(x(t))$, where 
 $H: \mathbb{R}^n \to \mathbb{R}^n$ is a Marchaud map. Let $A$ be the associated globally attracting
 set that is also Lyapunov stable. Then $A$ is an attractor and every compact set
 containing $A$ is a fundamental neighborhood.
\end{lemma}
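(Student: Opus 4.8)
The plan is to reduce the whole statement to the single assertion: for every compact $C$ with $A\subseteq C$ and every $\epsilon>0$ there is a time $T=T(C,\epsilon)\ge 0$ with $\Phi_{[T,+\infty)}(C)\subseteq N^{\epsilon}(A)$. Compactness and invariance of $A$ are part of the hypothesis that $A$ is a globally attracting set, so once the displayed assertion is proved we are done on both counts at once: taking $C=\overline{N^{1}}(A)$ (which is compact, $A$ being compact) shows $A$ is an attractor with $\overline{N^{1}}(A)$ a fundamental neighborhood, and the assertion applied to an arbitrary compact $C\supseteq A$ shows every such $C$ is a fundamental neighborhood.

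First I would fix a compact $C\supseteq A$ and an $\epsilon>0$, and invoke Lyapunov stability to pick $\eta\in(0,\epsilon)$ with $\Phi_{[0,+\infty)}(N^{\eta}(A))\subseteq N^{\epsilon}(A)$. The crux is the \emph{uniform entrance} claim: there is $T\ge 0$ such that every $\mathbf{x}\in\sum$ with $\mathbf{x}(0)\in C$ has $\mathbf{x}(s)\in N^{\eta}(A)$ for some $s\in[0,T]$. I would prove this by contradiction: if it fails, then for each $n$ there is $\mathbf{x}_{n}\in\sum$ with $\mathbf{x}_{n}(0)\in C$ and $d(\mathbf{x}_{n}(s),A)\ge\eta$ for all $s\in[0,n]$. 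Since $H$ is Marchaud, its linear growth bound together with a Gronwall estimate confines the family $\{\mathbf{x}_{n}\}$ to a fixed compact set on each interval $[0,m]$ and makes it equi-Lipschitz there; by Arzel\`a--Ascoli and the closedness of $\sum$ under uniform-on-compacts convergence, a subsequence converges to some $\mathbf{x}\in\sum$ with $\mathbf{x}(0)\in C$ and $d(\mathbf{x}(s),A)\ge\eta$ for \emph{all} $s\ge 0$. Since $\mathbf{x}$ is a bounded trajectory, $L(\mathbf{x})$ is nonempty, and $L(\mathbf{x})\subseteq\omega_{\Phi}(\mathbf{x}(0))\subseteq A$ because $A$ is globally attracting; this contradicts $\mathbf{x}$ staying at distance $\ge\eta$ from $A$.

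Granting the claim, the rest is bookkeeping: given any $\mathbf{x}\in\sum$ with $\mathbf{x}(0)\in C$, choose $s\in[0,T]$ with $\mathbf{x}(s)\in N^{\eta}(A)$; the shifted map $t\mapsto\mathbf{x}(s+t)$ again lies in $\sum$ and starts in $N^{\eta}(A)$, so by the choice of $\eta$ it stays in $N^{\epsilon}(A)$, whence $\mathbf{x}(t)\in N^{\epsilon}(A)$ for all $t\ge s$, in particular for all $t\ge T$. Taking the union over all such solutions yields $\Phi_{[T,+\infty)}(C)\subseteq N^{\epsilon}(A)$, as required.

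I expect the one genuinely technical ingredient to be the compactness input — that solutions of a Marchaud differential inclusion issuing from a bounded set stay bounded on compact time intervals and that $\sum$ is closed under local uniform limits — which underlies the subsequence extraction; everything else is manipulation of neighborhoods and time shifts. The second delicate point is the boundedness of the limit trajectory $\mathbf{x}$ (needed so that $L(\mathbf{x})\neq\emptyset$ and the contradiction with global attraction actually bites): this is precisely the place where the \emph{global} nature of the attraction hypothesis, as opposed to attraction from a single fundamental neighborhood, is used, so I would make sure that step is spelled out.
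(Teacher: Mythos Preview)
Your approach is sound and takes a genuinely different route from the paper's. The paper argues by an \emph{open-cover} argument: for each $x_0\in K$ it uses global attraction to pick a single time $t(x_0)$ with $\Phi_{t(x_0)}(x_0)\subseteq N^{\delta/4}(A)$, then invokes upper semi-continuity of the flow map $x\mapsto\Phi_{t(x_0)}(x)$ (citing Aubin--Cellina) to obtain a ball $N^{\delta(x_0)}(x_0)$ on which the same containment persists, and finishes by extracting a finite subcover of $K$ and setting $T(\epsilon)=\max_i t(x_i)$. You instead prove the uniform entrance time by contradiction, using sequential compactness of the solution set $\sum$ (Gronwall bounds plus Arzel\`a--Ascoli plus closedness of $\sum$ under local uniform limits). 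The paper's route is shorter because it outsources the hard analysis to the cited USC-of-flow result; yours is more self-contained but requires that machinery to be made explicit.

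One caution on the step you already flag as delicate: boundedness of the limit trajectory $\mathbf{x}$ is not automatic from the paper's definition of global attraction, since $\omega_\Phi(\mathbf{x}(0))\subseteq A$ does not by itself rule out $L(\mathbf{x})=\emptyset$ when $\|\mathbf{x}(t)\|\to\infty$. You promise to spell this out but do not actually do so; be aware that the paper's own step ``$\exists\,t(x_0)$ with $\Phi_{t(x_0)}(x_0)\subseteq N^{\delta/4}(A)$'' conceals exactly the same issue, so this is a subtlety shared with the original rather than one introduced by your method.
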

\begin{proof}
Since $A$ is compact and invariant, it is left to prove the following:
 given a compact set $K \subseteq \mathbb{R}^n$ such that $A \subseteq K$; for each
 $\epsilon > 0$ there exists $T(\epsilon) > 0$  such that $\Phi_{t}(K) \subseteq N^\epsilon (A)$ for all $t \ge T(\epsilon)$.
\\ \indent
 Since $A$ is Lyapunov stable, corresponding to $N^\epsilon (A)$ there exists $N^\delta (A)$,
 where $\delta > 0$, such that $\Phi_{[0, + \infty)} (N^\delta (A)) \subseteq N^\epsilon (A)$.
 Fix $x_{0} \in K$. Since $A$ is a globally attracting set, $\exists t(x_0) > 0$ such that 
 $\Phi_{t(x_0)}(x_0) \subseteq N^{\delta / 4} (A)$. Further, 
 from the upper semi-continuity of flow it follows that %page 114 aubin
 $\Phi_{t(x_0)}(x) \subseteq N^{\delta / 4} (\Phi_{t(x_0)}(x_0))$ for all $x \in N^{\delta(x_0)}(x_0)$,
 where $\delta(x_0) > 0$, see \textit{Chapter 2} of Aubin and Cellina \cite{Aubin}. 
 Hence we get $\Phi_{t(x_0)}(x) \subseteq N^\delta (A)$. Further since $A$ is Lyapunov stable, we get
 $\Phi_{(t(x_0), + \infty]} (x) \subseteq N^\epsilon (A)$. In this manner for each $x \in K$
 we calculate $t(x)$ and $\delta(x)$, the collection
 $\left\{N^{\delta(x)}(x) : x \in K \right\}$
 is an open cover for $K$. Since $K$ is compact, there exists a finite sub-cover
 $\left\{N^{\delta(x_{i})}(x_{i}) \ |\  1 \le i \le m \right\}$. For 
 $T(\epsilon) := max \{ t(x_{i}) \ |\  1 \le i \le m \}$, we have
 $\Phi_{[T(\epsilon), + \infty)} (K) \subseteq N^\epsilon(A)$.
 
\end{proof}
%%%%%%%%%%%%%%%%%%%%%%%%%%%%%%%%%%%%%%%%%%%%%%%%%%%%%%%%%%%%%%%%%%%%%%%%%%%
In Theorem~\ref{slowy} we prove that the slower timescale trajectory
asymptotically tracks a solution to $\dot{y}(t) \in G(y(t))$. The following lemma
ensures that the aforementioned $DI$ has at least one solution.

\begin{lemma} \label{gismarchaud}
 The map $G$ referred to in (A6) is a Marchaud map.
\end{lemma}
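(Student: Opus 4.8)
We must verify the three defining properties of a Marchaud map for $G(y) = \overline{co}\left(\bigcup_{x \in \lambda(y)} g(x,y)\right)$: convexity and compactness of values, point-wise boundedness, and upper semi-continuity. The first property is essentially free: by construction $G(y)$ is closed and convex, and we obtain compactness by showing $G(y)$ is bounded, which will come out of the point-wise boundedness argument. For point-wise boundedness, fix $y$ and take any $x \in \lambda(y) = A_y$; by the growth bound in (A5) we have $\lVert x \rVert \le K(1 + \lVert y \rVert)$, and then by the point-wise boundedness of the Marchaud map $g$ (property (ii) in the definition), every $w \in g(x,y)$ satisfies $\lVert w \rVert < K(1 + \lVert x \rVert + \lVert y \rVert) \le K(1 + K(1+\lVert y \rVert) + \lVert y \rVert)$, a bound depending only on $\lVert y \rVert$. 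Since taking the convex hull and then the closure of a set contained in a fixed ball keeps it inside that (closed) ball, $G(y)$ is bounded by an affine function of $\lVert y \rVert$, hence compact, and property (ii) holds.

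The real work is upper semi-continuity. The plan is to take sequences $y_n \to y$, $z_n \to z$ with $z_n \in G(y_n)$, and show $z \in G(y)$. By Carathéodory's theorem in $\mathbb{R}^k$, each $z_n$ (lying in the closed convex hull of $\bigcup_{x \in \lambda(y_n)} g(x,y_n)$) can be written as a limit of convex combinations of at most $k+1$ points of the form $w^j_n \in g(x^j_n, y_n)$ with $x^j_n \in \lambda(y_n)$; in fact, since that union is already shown bounded, its convex hull is closed once we note the union itself need not be closed — so I would first argue $\bigcup_{x \in \lambda(y_n)} g(x,y_n)$ has compact closure and write $z_n = \sum_{j=1}^{k+1} \alpha^j_n w^j_n$ as an exact convex combination of points in the closure (Carathéodory applies to the closed convex hull of a compact set). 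Now use (A4)-style boundedness: the $x^j_n$ lie in $\lambda(y_n)$, which by (A5) is contained in a ball of radius $K(1+\lVert y_n\rVert)$, a bounded set since $y_n \to y$; the $w^j_n$ are bounded by the estimate above; and the $\alpha^j_n$ lie in the simplex. Pass to a subsequence so that $\alpha^j_n \to \alpha^j$, $x^j_n \to x^j$, $w^j_n \to w^j$ for each $j$. Then $\sum_j \alpha^j = 1$, $\alpha^j \ge 0$, and $z = \sum_j \alpha^j w^j$.

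It remains to identify the limit points. Since $x^j_n \in \lambda(y_n)$, $x^j_n \to x^j$, $y_n \to y$, and $\lambda$ is upper semi-continuous by (A5), we get $x^j \in \lambda(y)$. Since $w^j_n \in g(x^j_n, y_n)$ (or in its closure — but $g$ is compact-valued, so closures add nothing, and upper semi-continuity of $g$ handles the general case), $w^j_n \to w^j$, $(x^j_n, y_n) \to (x^j, y)$, and $g$ is upper semi-continuous by (A1), we get $w^j \in g(x^j, y) \subseteq \bigcup_{x \in \lambda(y)} g(x,y)$. Therefore $z = \sum_j \alpha^j w^j$ is a convex combination of points in $\bigcup_{x \in \lambda(y)} g(x,y)$, so $z \in \overline{co}\left(\bigcup_{x \in \lambda(y)} g(x,y)\right) = G(y)$, as required.

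**Main obstacle.** The one subtlety I expect to need care is the interplay between "closed convex hull" and Carathéodory's theorem: Carathéodory gives convex combinations of points in the set, but $G(y_n)$ is the \emph{closure} of the convex hull, so a given $z_n$ is only a limit of such combinations. The clean fix is to observe that $\bigcup_{x\in\lambda(y_n)}g(x,y_n)$ lies in a fixed compact ball (from the boundedness step), hence its convex hull is already compact — being the continuous image of (simplex) $\times$ (closure of the union) $^{k+1}$ under the affine-combination map — so in fact $G(y_n)$ equals $\overline{co}$ of the \emph{closure} of the union, and Carathéodory then represents $z_n$ \emph{exactly}. With that observation in place the compactness/diagonal-subsequence argument goes through routinely; everything else is a direct application of the upper semi-continuity hypotheses on $\lambda$ and $g$.
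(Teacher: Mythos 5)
Your proof is correct, but it takes a genuinely different route from the paper. You establish upper semi-continuity directly: using the uniform bound from (A1) and (A5) you note that $B(y_n):=\bigcup_{x\in\lambda(y_n)}g(x,y_n)$ lies in a fixed ball, invoke Carath\'eodory to write $z_n$ as an exact convex combination of at most $k+1$ points of $\overline{B(y_n)}$ (using $\overline{co}(B)=co(\overline{B})$ for bounded $B$), extract convergent subsequences of the weights and points, and identify the limits via the upper semi-continuity of $\lambda$ and $g$. The paper instead argues by contradiction with a separating hyperplane: if $z\notin G(y)$, a linear functional $f$ strictly separates $z$ from $G(y)$; since $f(z_n)\ge\alpha+\epsilon/2$ eventually, the half-space $[f\ge\alpha+\epsilon/2]$ must meet the union $B(y_n)$ itself (otherwise its closed convex hull would stay in $[f\le\alpha+\epsilon/2]$), and then a single sequence $w_n\in g(x_n,y_n)$, $x_n\in\lambda(y_n)$, is extracted and sent to the limit, contradicting the separation. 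Your route buys a more constructive statement (it essentially shows $G(y)=co(B(y))$ with no closure needed) at the cost of Carath\'eodory bookkeeping; the paper's route avoids Carath\'eodory entirely and only ever handles one sequence of points, with the half-space doing the work your convex weights do. One point you should make airtight: after Carath\'eodory your points $w^j_n$ lie in $\overline{B(y_n)}$, and a point of the closure does not automatically come with an $x^j_n\in\lambda(y_n)$; your parenthetical remark that ``$g$ is compact-valued, so closures add nothing'' does not by itself cover this, since it is the union over $x\in\lambda(y_n)$, not an individual $g(x,y_n)$, whose closedness is at issue. The gap is easily closed: $\lambda(y_n)=A_{y_n}$ is compact (a globally attracting set is an attractor, hence compact, in the paper's definitions), and the image of a compact set under the u.s.c.\ compact-valued map $g(\cdot,y_n)$ is closed, so $B(y_n)$ is already compact and $w^j_n\in g(x^j_n,y_n)$ for some $x^j_n\in\lambda(y_n)$; alternatively, replace each $w^j_n$ by a point of $B(y_n)$ within distance $1/n$ and run the same subsequence argument. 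With that line added, your argument is complete.
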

\begin{proof}
Fix an arbitrary $y \in \mathbb{R}^k$.
 For any $x \in \lambda(y)$, it follows from $(A1)$ that
 \begin{equation}\nonumber
  \underset{z \in g(x,y)}{\sup} \lVert z \rVert
 \le K(1 + \lVert x\rVert + \lVert y \rVert ). 
 \end{equation}

 From assumption $(A5)$, we have that $\lVert x \rVert
 \le K(1 + \lVert y \rVert)$. Substituting in the above equation we may conclude the following:
 
 \begin{equation} \nonumber
\begin{split}
 & \underset{ z \in g(x,y)}{\sup} \lVert z \rVert \le K \left(1 + K \left(1 + \lVert y\rVert \right) + \lVert y \rVert \right)
 = K(K+1) (1 + \lVert y \rVert) \ , \\
& \underset{ z \in \underset {x \in \lambda(y)}{\bigcup} g(x,y)}{\sup} \lVert z \rVert \le 
 K(K+1) (1 + \lVert y \rVert) \ , \\
& \underset{ z \in G(y)}{\sup} \lVert z \rVert \le 
 K(K+1) (1 + \lVert y \rVert).
\end{split}
\end{equation}
We have thus proven that $G$ is point-wise bounded.
From the definition of $G$, it follows that $G(y)$ is convex and compact.
\\ \indent
It remains to show that $G$ is an upper semi-continuous map.
Let $z_{n} \to z$ and $y_{n} \to y$ in $\mathbb{R}^k$ with $z_{n} \in G(y _n)$, 
 $\forall$ $n \ge 1$.
 We need to show that $z \in G(y)$. We present a proof by contradiction.
 Since $G(y)$ is convex and compact, $z \notin G(y)$
 implies that there exists a linear functional on $\mathbb{R}^{k}$, say $f$, such that
 $\underset{w \in G(y)}{\sup}$ $f(w) \le \alpha - \epsilon$
 and $f(z) \ge \alpha + \epsilon$, for some
 $\alpha \in \mathbb{R}$ and $\epsilon > 0$. Since $z_{n} \to z$, there exists 
 $N$ such that for all $n \ge N$, $f(z_{n}) \ge \alpha + \frac{\epsilon}{2}$. In other
 words, $G(y_{n}) \cap  [f \ge \alpha + \frac{\epsilon}{2}] \neq \phi$ for
 all $n \ge N$. Here the notation $[f \ge a]$ is used to denote the set
 $\left\{ x \ |\ f(x) \ge a \right\}$.
 \\ \indent
 For the sake of convenience, we denote the set
 $\underset{x \in \lambda(y)}{\bigcup} g(x,y)$ by $B(y)$.
 We claim that $B(y_{n}) \cap [f \ge \alpha + \frac{\epsilon}{2}] \neq \phi$
 for all $n \ge N$. We prove this claim later,
 for now we assume that the claim
 is true and proceed.
 %%%%%%%%%%%%%%%%%%%%%%%%%%%%%%%%%%%%%%%%%%%%%%%%%%%%%%%%%%%%%%%%%%%%%%%%%%%%%%%%%%%%%%%%
Pick $w_{n} \in g(x_{n}, y_{n}) \cap [f \ge \alpha + \frac{\epsilon}{2}]$, where $x_{n} \in \lambda(y_{n})$
 and $n \ge N$ . 
 It can be shown that $\{x_{n}\}_{n \ge N}$ and $\{w_{n}\}_{n \ge N}$ are norm bounded sequences
 and hence contain convergent sub-sequences. Construct sub-sequences,
 $\{w_{n(k)}\}_{k \ge 1} \subseteq \{w_{n}\}_{n \ge N}$ and
  $\{x_{n(k)}\}_{k \ge 1} \subseteq \{x_{n}\}_{n \ge N}$ such that
 $\underset{k \to \infty}{\lim} w_{n(k)} = w$ and $\underset{k \to \infty}{\lim} x_{n(k)} = x$.
 It follows from the upper semi-continuity of $g$ that $w \in g(x,y)$ and from the upper
 semi-continuity of $\lambda$ that $x \in \lambda(y)$, hence
 $w \in G(y)$. Since $f$ is continuous, $f(w) \ge \alpha + \frac{\epsilon}{2}$. This is a contradiction.
 \\ \indent
 It remains to prove that  $B(y_{n}) \cap [f \ge \alpha + \frac{\epsilon}{2}] \neq \phi$
 for all $n \ge N$. Suppose this were false, then
 $\exists \{m(k)\}_{k \ge 1} \subseteq \{n \ge N\}$ 
 such that $B(y_{m(k)}) \subseteq [f < \alpha + \frac{\epsilon}{2}]$
 for each $k \ge 1$. It can
 be shown that $\overline{co}(B(y_{m(k)})) \subseteq 
 [f \le \alpha + \frac{\epsilon}{2}]$ for each $k \ge 1$. 
 Since $z_{m(k)} \to z$, $\exists N_{1}$ such that for all $m(k) \ge N_1$, 
 $f(z_{m(k)}) \ge \alpha + \frac{3 \epsilon}{4}$. This is a contradiction.
 Hence we get $B(x_{n}) \cap [f \ge \alpha + \frac{\epsilon}{2}] \neq \phi$
 for all $n \ge N$.
 \end{proof}
 %%%%%%%%%%%%%%%%%%%%%%%%%%%%%%%%%%%%%%%%%%%%%%%%%%%%%%%%%%%%%%%%%%%%%%%%%%%
 It is worth noting that $(A5)$ is a key requirement in the above proof. 
 In the next lemma, we show the convergence of the martingale noise terms.
 \begin{lemma} \label{noiseconv}
  The sequences $\{ \zeta^1 _n \}$ and $\{ \zeta^2 _n \}$, where $ \zeta^1 _n 
  =  \sum_{m=0}^{n-1} a(m) M^1_{m+1}$ and $ \zeta^2 _n 
  =  \sum_{m=0}^{n-1} b(m) M^2_{m+1}$, are convergent almost surely.
 \end{lemma}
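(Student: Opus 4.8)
The plan is to recognize $\{\zeta^1_n\}$ and $\{\zeta^2_n\}$ as square-integrable martingales and to apply the almost-sure martingale convergence theorem via the quadratic-variation (sum of conditional second moments) criterion. First I would note that, since the scalars $a(m)$ and $b(m)$ are deterministic and hence $\mathcal{F}_m$-measurable, and since $\{M^1_n\}$, $\{M^2_n\}$ are martingale difference sequences with respect to $\{\mathcal{F}_n\}$, the partial sums $\zeta^1_n = \sum_{m=0}^{n-1} a(m) M^1_{m+1}$ and $\zeta^2_n = \sum_{m=0}^{n-1} b(m) M^2_{m+1}$ are $\mathbb{R}^d$- and $\mathbb{R}^k$-valued martingales with respect to $\{\mathcal{F}_n\}$, starting at $0$; integrability for each finite $n$ is immediate since each $M^i_{m+1}$ is square-integrable and the sums are finite.

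The key step is to bound the sum of conditional second moments. Using (A3),
\[
\sum_{m=0}^{\infty} a(m)^2\, E\!\left[\lVert M^1_{m+1}\rVert^2 \mid \mathcal{F}_m\right]
\ \le\ K \sum_{m=0}^{\infty} a(m)^2 \left(1 + \left(\lVert x_m\rVert + \lVert y_m\rVert\right)^2\right),
\]
and similarly for the $b(m)$, $M^2$ sum. By (A4) there is, on a set of full probability, a sample-path-dependent constant $r < \infty$ with $\sup_m\big(\lVert x_m\rVert + \lVert y_m\rVert\big) \le r$; on that set the right-hand side above is at most $K(1+r^2)\sum_m a(m)^2$, which is finite by (A2). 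Hence $\sum_m a(m)^2 E[\lVert M^1_{m+1}\rVert^2 \mid \mathcal{F}_m] < \infty$ almost surely, and likewise $\sum_m b(m)^2 E[\lVert M^2_{m+1}\rVert^2 \mid \mathcal{F}_m] < \infty$ almost surely. Applying the martingale convergence theorem in the form ``a (vector) martingale whose sum of conditional second moments is a.s.\ finite converges a.s.'' (see, e.g., the appendix of Borkar \cite{BorkarBook}), componentwise if one prefers the scalar statement, then yields that $\{\zeta^1_n\}$ and $\{\zeta^2_n\}$ converge almost surely.

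The main obstacle is that (A4) provides only a pathwise (almost sure) bound on $\{x_n\}$ and $\{y_n\}$, not a uniform $L^2$ bound, so the classical ``$L^2$-bounded martingale converges'' theorem does not apply directly. This is precisely why one routes the argument through the a.s.\ finiteness of $\sum_m a(m)^2 E[\lVert M^1_{m+1}\rVert^2 \mid \mathcal{F}_m]$; equivalently, one may localize by the stopping times $\tau_C = \inf\{n : \lVert x_n\rVert + \lVert y_n\rVert > C\}$, observe that the stopped martingale $\{\zeta^1_{n \wedge \tau_C}\}$ is $L^2$-bounded by (A3) and (A2) and hence convergent, and then let $C \to \infty$, using (A4) to see that $\bigcup_C \{\tau_C = \infty\}$ has probability one. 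Everything else is routine bookkeeping.
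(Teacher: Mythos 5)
Your proposal is correct and follows essentially the same route as the paper: both reduce the claim to the almost-sure finiteness of $\sum_m a(m)^2 E[\lVert M^1_{m+1}\rVert^2 \mid \mathcal{F}_m]$ (and its $b(m)$, $M^2$ analogue), bound this via (A3) by $K\sum_m a(m)^2\bigl(1+(\lVert x_m\rVert+\lVert y_m\rVert)^2\bigr)$, and conclude using (A2) and the pathwise bound (A4) together with the martingale convergence theorem. Your additional remark on localizing with stopping times makes explicit the standard justification the paper leaves to the cited references, but it is the same argument.
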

 \begin{proof}
 Although a proof of the above statement can be found in \cite{Benaim96} or \cite{BorkarBook},
 we provide one for the sake of completeness.
  We only prove the almost sure convergence of $\zeta^1_n$ as the convergence of $\zeta^2_n$ 
  can be similarly shown. 
  \\ \indent
  It is enough to show that
  \begin{equation}\nonumber
  \begin{split}
   &\sum_{m=0}^{\infty} a(m)^2 E \left[ \lVert \zeta^1_{m+1} - \zeta^1 _{m} \rVert ^2 | \mathcal{F}_m \right]  \ < \ \infty \ a.s., 
   \\
   i.e., &\sum_{m=0}^{\infty} a(m)^2 E \left[ \lVert M^1_{m+1} \rVert ^2 | \mathcal{F}_m \right]  \ < \ \infty \ a.s.
   \end{split}
  \end{equation}
  From assumption $(A3)$ it follows that 
  \begin{equation} \nonumber
   \sum_{m=0}^{\infty} a(m)^2 E \left[ \lVert M^1_{m+1} \rVert ^2 | \mathcal{F}_m \right]  \ \le \
   K \sum_{m=0}^{\infty} a(m)^2 \left( 1 + (\lVert x_m\rVert + \lVert y_m\rVert)^2 \right).
  \end{equation}
From assumptions $(A2)$ and $(A4)$ it follows that 
\begin{equation} \nonumber
K\sum_{m=0}^{\infty} a(m)^2 \left( 1 + (\lVert x_m\rVert + \lVert y_m\rVert)^2 \right) < \infty \ 
a.s.
\end{equation}
 \end{proof}
%%%%%%%%%%%%%%%%%%%%%%%%%%%%%%%%%%%%%%%%%%%%%%%%%%%%%%%%%%%%%%%%%%%%%%%%%%%
We now prove a couple of technical results that are essential to the proofs
of Theorems ~\ref{fastx} and ~\ref{slowy}.
 \begin{lemma} \label{xclosetoay}
 Given any $y_0 \in \mathbb{R}^k$ and $\epsilon > 0$, there exists $\delta > 0$ such that
 for all $x \in N^\delta(\lambda(y_0))$, we have $g(x, y_0) \subseteq N^\epsilon (G(y_0))$.
 \end{lemma}
 \begin{proof}
 Assume the statement is not true. Then,
  $\exists$ $\delta_n \downarrow 0$ and $x_{n} \in N^{\delta_n}(\lambda(y_0))$ such that
  $g(x_n, y_0) \nsubseteq N^\epsilon (G(y_0 ))$, $n \ge 1$. 
  In other words, $\exists \gamma_n \in g(x_n, y_0)$ and $\gamma_n \notin N^\epsilon (G(y_0))$ for each $n \ge 1$.
  Since $\{x_n\}$ and $\{ \gamma_n \}$ are bounded sequences there exist convergent sub-sequences,
  $\underset{k \to \infty} {\lim} x_{n(k)} = x$ and
  $\underset{k \to \infty} {\lim} \gamma_{n(k)} = \gamma$. 
  Since $x_{n(k)} \in N^{\delta_{n(k)}} (\lambda(y_0))$ and $\delta_{n(k)} \downarrow 0$ it
  follows that $x \in \lambda(y_0)$ and hence $g(x, y_0) \subseteq G(y_0)$. We also have that
  $v \notin N^\epsilon (G(y_0))$ as $v_{n(k)} \notin N^\epsilon (G(y_0))$ for all $k \ge 1$.
  Since $g$ is
  upper semi-continuous it follows that $\gamma \in g(x, y_0)$ and hence 
  $\gamma \in G(y_0)$. This is a contradiction.
 \end{proof}

\begin{lemma} \label{seqto}
 Let $x_0 \in \mathbb{R}^d$ and $y_0 \in \mathbb{R}^k$ be such that the statement of
 lemma ~\ref{xclosetoay} is satisfied (with $x_0$ in place of $x$). If $\underset{n \to \infty}{\lim} x_n = x_0$ and
 $\underset{n \to \infty}{\lim} y_n = y_0$ then $\exists N$ such that $\forall n \ge N$,
 $g(x_n, y_n) \subseteq N^\epsilon (G(y_0))$.
\end{lemma}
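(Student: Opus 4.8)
The plan is to argue by contradiction, leaning on the point-wise boundedness and upper semi-continuity of $g$ together with the hypothesis that the conclusion of Lemma~\ref{xclosetoay} holds at $x_0$, namely $g(x_0,y_0)\subseteq N^\epsilon(G(y_0))$. Suppose the claim fails. Then for every $N$ there is some $n\ge N$ with $g(x_n,y_n)\not\subseteq N^\epsilon(G(y_0))$, so we can extract a subsequence $\{n(k)\}_{k\ge 1}$ and, for each $k$, a point $\gamma_{n(k)}\in g(x_{n(k)},y_{n(k)})$ with $\gamma_{n(k)}\notin N^\epsilon(G(y_0))$.

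First I would observe that $\{x_{n(k)}\}$ and $\{y_{n(k)}\}$ are bounded, being convergent, so by property (ii) of the Marchaud map $g$ in (A1) the sequence $\{\gamma_{n(k)}\}\subseteq\mathbb{R}^k$ is bounded; hence along a further subsequence, which I relabel as $\{n(k)\}$, we have $\gamma_{n(k)}\to\gamma$ for some $\gamma\in\mathbb{R}^k$. Since $(x_{n(k)},y_{n(k)})\to(x_0,y_0)$ and $\gamma_{n(k)}\in g(x_{n(k)},y_{n(k)})$ for all $k$, property (iii) of (A1) (upper semi-continuity of $g$, in the closed-graph sense of Section~\ref{definition}) gives $\gamma\in g(x_0,y_0)$. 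By the hypothesis on $x_0$ this yields $\gamma\in N^\epsilon(G(y_0))$, i.e. $d(\gamma,G(y_0))<\epsilon$.

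On the other hand, each $\gamma_{n(k)}$ lies in the complement of the open set $N^\epsilon(G(y_0))$, so $d(\gamma_{n(k)},G(y_0))\ge\epsilon$ for every $k$; as $z\mapsto d(z,G(y_0))$ is continuous, letting $k\to\infty$ gives $d(\gamma,G(y_0))\ge\epsilon$. This contradicts $d(\gamma,G(y_0))<\epsilon$, and the claim follows.

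I expect the delicate points to be bookkeeping rather than conceptual: one must use the convergence $x_n\to x_0$ (not merely boundedness of $\{x_n\}$), since this is what forces the limit point $\gamma$ to lie in $g(x_0,y_0)$ — the only set on which the hypothesis gives us containment in $N^\epsilon(G(y_0))$ — rather than in $g(x',y_0)$ for an uncontrolled limit point $x'$; and one should note that $N^\epsilon(\cdot)$ is the \emph{open} neighborhood, so its complement is closed and stable under passing to limits. Everything else is the standard compactness-plus-closed-graph routine already employed in Lemmas~\ref{gismarchaud} and~\ref{xclosetoay}.
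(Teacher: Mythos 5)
Your proof is correct and follows essentially the same route as the paper's: assume failure, extract $\gamma_{n(k)}\in g(x_{n(k)},y_{n(k)})$ outside $N^\epsilon(G(y_0))$, use boundedness to pass to a convergent subsequence, invoke upper semi-continuity of $g$ to place the limit $\gamma$ in $g(x_0,y_0)\subseteq N^\epsilon(G(y_0))$, and contradict the fact that the complement of the open neighborhood is closed. Your version is if anything slightly more careful than the paper's, since you justify the boundedness of $\{\gamma_{n(k)}\}$ via the Marchaud bound and make the final distance-function contradiction explicit.
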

\begin{proof}
 If not, $\exists$ $\{n(k)\} \subseteq \{n\}$ such that $\underset{k \to \infty}{\lim} n(k) = \infty $
 and $g(x_{n(k)}, y_{n(k)}) \nsubseteq N^\epsilon (G(y_0))$.
 Without loss of generality assume that $\{n(k)\} = \{n\}$. In other words,
 $\exists \gamma_n \in g(x_n, y_n)$ such that $\gamma_n \notin N^\epsilon (G(y_0))$ for all $n \ge 1$.
 Since $\{\gamma_n\}$ is a bounded sequence, it has a convergent sub-sequence, $i.e.$, 
 $\underset{m \to \infty}{\lim} \gamma_{n(m)} = \gamma$. 
 Since $\underset{m \to \infty}{\lim} x_{n(m)} = x_0$, $\underset{m \to \infty}{\lim} y_{n(m)} = y_0$ and
 $g$ is upper semi-continuous it
 follows that $\gamma \in g(x_0, y_0)$ and finally from lemma \ref{xclosetoay} 
 we get that $\gamma \in N^\epsilon (G(y_0))$.
 This is a contradiction.
\end{proof}
%%%%%%%%%%%%%%%%%%%%%%%%%%%%%%%%%%%%%%%%%%%%%%%%%%%%%%%%%%%%%%%%%%%%%%%%%%%
%%%%%%%%%%%%%%%%%%%%%%%%%%%%%%%%%%%%%%%%%%%%%%%%%%%%%%%%%%%%%%%%%%%%%%%%%%%
%%%%%%%%%%%%%%%%%%%%%%%%%%%%%%%%%%%%%%%%%%%%%%%%%%%%%%%%%%%%%%%%%%%%%%%%%%%
%%%%%%%%%%%%%%%%%%%%%%%%%%%%%%%%%%%%%%%%%%%%%%%%%%%%%%%%%%%%%%%%%%%%%%%%%%%
%%%%%%%%%%%%%%%%%%%%%%%%%%%%%%%%%%%%%%%%%%%%%%%%%%%%%%%%%%%%%%%%%%%%%%%%%%%
Before we proceed let us construct trajectories, using (\ref{twotimescale}), with respect to the faster timescale.
Define $t(0) := 0$, $t(n) \ := \ \sum_{i=0}^{n-1} a(i)$, $n \ge 1$.
The linearly interpolated trajectory $\overline{x}(t)$, $t \ge 0$,
is constructed from  the sequence $\{ x_{n} \}$ as follows: let $\overline{x}(t(n)) \ := \ x_{n}$ and for 
$t \ \in \ (t(n), t(n+1))$, let 
\begin{equation}\label{xbar}
\overline{x}(t) \ := \ 
\left( \frac{t(n+1) - t}{t(n+1) - t(n)} \right)\ \overline{x}(t(n))\ +\ 
\left( \frac{t - t(n)}{t(n+1) - t(n)} \right) \  
\overline{x}(t(n+1)). 
\end{equation}
We construct a piecewise constant trajectory from the sequence $\{u_{n}\}$
as follows: $\overline{u}(t) := u_{n}$ for $t \in [t(n), t(n+1))$, $n \ge 0$.
\paragraph{}
Let us construct trajectories with respect to the slower timescale in a similar manner.
Define $s(0) := 0$, $s(n) \ := \ \sum_{i=0}^{n-1} b(i)$, $n \ge 1$.
Let $\widetilde{y}(s(n)) \ := \ y_{n}$ and for 
$s \ \in \ (s(n), s(n+1))$, let 
\begin{equation}\label{ytilde}
\widetilde{y}(s) \ := \ 
\left( \frac{s(n+1) - s}{s(n+1) - s(n)} \right)\ \widetilde{y}(s(n))\ +\ 
\left( \frac{s - s(n)}{s(n+1) - s(n)} \right) \  
\widetilde{y}(s(n+1)). 
\end{equation}
Also $\widetilde{v}(s) := v_{n}$ for $s \in [s(n), s(n+1))$, $n \ge 0$, is the corresponding
piecewise constant trajectory.
\paragraph{}
For $s \ge 0$, let $x^s (t)$, $t \ge 0$, denote the solution to $\dot{x}^s (t) = \overline{u}
(s + t)$ with the initial condition $x^s (0) = \overline{x}(s)$. Similarly,
let $y^s (t)$, $t \ge 0$, denote the solution to $\dot{y}^s (t) = \widetilde{v}
(s + t)$ with the initial condition $y^s (0) = \widetilde{y}(s)$.
\paragraph{}
The $y$ iterate in recursion (\ref{twotimescale}) can be re-written as
\begin{equation}\label{rewritey}
 y_{n+1} = y_n + a(n) \left[ \frac{b(n)}{a(n)} v_n + \frac{b(n)}{a(n)} M^{2}_{n+1} \right].
\end{equation}
Define $\epsilon(n) := \frac{b(n)}{a(n)} v_n$ and $M^3_{n+1} = \frac{b(n)}{a(n)} M^2_{n+1}$. 
It can be shown that the stochastic iteration given by $y_{n+1} = y_n + a(n) M^3_{n+1}$ satisfies 
the set of assumptions given in Bena\"{i}m \cite{Benaim96}. %\cite
From $(A1)$, $(A2)$ and $(A4)$ it follows that $\epsilon(n) \to 0$ almost surely. Since
$\epsilon(n) \to 0$ the
recursion given by (\ref{rewritey}) and $y_{n+1} = y_n + a(n)M^3_{n+1}$ have the same
asymptotics. For a precise statement and proof the reader is referred to
lemma $2.1$ of \cite{Borkartt}. %borkar two timescale
\\ \indent
Define $\overline{y}(t(n)) := y_{n}$, where $n \ge 0$ and
$\overline{y}(t)$ for
$t \ \in \ (t(n), t(n+1))$ by 
\begin{equation}\label{ybar}
\overline{y}(t) \ := \ 
\left( \frac{t(n+1) - t}{t(n+1) - t(n)} \right)\ \overline{y}(t(n))\ +\ 
\left( \frac{t - t(n)}{t(n+1) - t(n)} \right) \  
\overline{y}(t(n+1)). 
\end{equation}
The trajectory $\overline{y}(\cdotp)$ can be seen as an evolution of the $y$ iterate
with respect to the faster timescale, $\{a(n)\}$.
%%%%%%%%%%%%%%%%%%%%%%%%%%%%%%%%%%%%%%%%%%%%%%%%%%%%%%%%%%%%%%%%%%%%%%%%%%%
%%%%%%%%%%%%%%%%%%%%%%%%%%%%%%%%%%%%%%%%%%%%%%%%%%%%%%%%%%%%%%%%%%%%%%%%%%%
%%%%%%%%%%%%%%%%%%%%%%%%%%%%%%%%%%%%%%%%%%%%%%%%%%%%%%%%%%%%%%%%%%%%%%%%%%%

\begin{lemma} \label{fasty}
 Almost surely every limit point, $y(\cdotp)$, of $\{ \overline{y}(s+ \cdotp) \ |\  s \ge 0\}$
 in $C([0, \infty), \mathbb{R}^k)$ as $s \to \infty$ satisfies $y(t) = y(0),\ t \ge 0$.
\end{lemma}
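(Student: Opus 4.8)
The plan is to show that the faster-timescale interpolation $\overline{y}(\cdot)$ has vanishing increments over any fixed time window, so that every subsequential limit in $C([0,\infty),\mathbb{R}^k)$ is a constant trajectory. The starting point is the rewriting \eqref{rewritey}: along the faster clock $\{t(n)\}$, the $y$-iterate reads $y_{n+1} = y_n + a(n)\big[\epsilon(n) + M^3_{n+1}\big]$ with $\epsilon(n) := \frac{b(n)}{a(n)} v_n$ and $M^3_{n+1} := \frac{b(n)}{a(n)} M^2_{n+1}$. By (A1), (A2) and (A4) we have $\|v_n\| \le K(1 + \|x_n\| + \|y_n\|)$ bounded a.s., and $b(n)/a(n) \to 0$, so $\epsilon(n) \to 0$ a.s.; this is exactly the "negligible drift" term. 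The martingale part is handled by Lemma~\ref{noiseconv}: since $\{M^3_n\}$ inherits square-integrability from (A3) (indeed $E[\|M^3_{n+1}\|^2\mid\mathcal F_n] \le (b(n)/a(n))^2 K(1+(\|x_n\|+\|y_n\|)^2) \le a(n)^{-2}b(n)^2 K(\cdots)$, and $\sum a(n)^2 \cdot a(n)^{-2}b(n)^2 = \sum b(n)^2 < \infty$ after absorbing the bounded bracket via (A4)), the sums $\sum_{m=0}^{n-1} a(m) M^3_{m+1}$ converge a.s.

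With these two facts in hand, the core estimate is the following: fix $T > 0$; then for $s \ge 0$,
\[
\sup_{t \in [0,T]} \big\| \overline{y}(s+t) - \overline{y}(s) \big\|
\le \sum_{m = m(s)}^{m(s+T)-1} a(m)\,\|\epsilon(m)\|
 \;+\; \Big\| \sum_{m=m(s)}^{m(s+T)-1} a(m) M^3_{m+1} \Big\| \;+\; o(1),
\]
where $m(s) := \min\{n : t(n) \ge s\}$ and the $o(1)$ accounts for the at-most-one partially-covered interval (whose length is $\le a(m(s+T)) \to 0$, using $\sup_n a(n) \le 1$ from (A2) and the boundedness of the bracket). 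The first sum on the right tends to $0$ as $s \to \infty$ because $\epsilon(m) \to 0$ a.s. and $\sum_{m=m(s)}^{m(s+T)-1} a(m) \le T + o(1)$ is uniformly bounded (a Cesàro-type argument: split into $m$ small enough that $\|\epsilon(m)\| < \eta$ and the finitely many exceptional $m$, whose contribution vanishes as $s \to \infty$). The second term is the tail of an a.s.-convergent series, hence also tends to $0$ as $s \to \infty$, uniformly over the window since it is a Cauchy difference. Therefore $\sup_{t\in[0,T]}\|\overline{y}(s+t) - \overline{y}(s)\| \to 0$ a.s. as $s \to \infty$.

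Finally, suppose $y(\cdot)$ is a limit point of $\{\overline{y}(s+\cdot) : s \ge 0\}$ in $C([0,\infty),\mathbb{R}^k)$, so $\overline{y}(s_j + \cdot) \to y(\cdot)$ uniformly on compacts along some $s_j \to \infty$. For any fixed $t \ge 0$, pick $T > t$; then
\[
\|y(t) - y(0)\|
= \lim_{j\to\infty} \big\| \overline{y}(s_j + t) - \overline{y}(s_j) \big\|
\le \lim_{j\to\infty} \sup_{r\in[0,T]} \big\| \overline{y}(s_j + r) - \overline{y}(s_j) \big\| = 0,
\]
so $y(t) = y(0)$ for all $t \ge 0$, which is the claim. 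The main obstacle is the uniform-in-$s$ control of the drift sum $\sum a(m)\|\epsilon(m)\|$ over the moving window $[m(s), m(s+T))$: one must combine the a.s. convergence $\epsilon(m)\to 0$ with the bound $\sum_{m(s)}^{m(s+T)-1} a(m) \approx T$ carefully so that the finitely many "bad" indices with large $\|\epsilon(m)\|$ contribute vanishing weight as $s\to\infty$; everything else is standard interpolation bookkeeping and the martingale tail estimate already packaged in Lemma~\ref{noiseconv}.
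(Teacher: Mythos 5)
Your proposal is correct, and it rests on exactly the same decomposition the paper sets up just before the lemma: rewrite the $y$-iterate on the faster clock as $y_{n+1}=y_n+a(n)[\epsilon(n)+M^3_{n+1}]$ with $\epsilon(n)=\frac{b(n)}{a(n)}v_n\to 0$ a.s.\ (by (A1), (A2), (A4)) and $\sum_m a(m)M^3_{m+1}=\sum_m b(m)M^2_{m+1}$ convergent a.s.\ (Lemma~\ref{noiseconv}). Where you diverge is in how the conclusion is drawn: the paper simply asserts that $y_{n+1}=y_n+a(n)M^3_{n+1}$ satisfies Bena\"{i}m's (1996) assumptions and invokes Lemma 2.1 of Borkar's two-timescale paper to dispose of the $\epsilon(n)$ perturbation, concluding that the interpolated trajectory tracks $\dot{y}=0$; you instead prove constancy of the limits directly, by showing $\sup_{t\in[0,T]}\|\overline{y}(s+t)-\overline{y}(s)\|\to 0$ a.s.\ for every fixed $T$, splitting the window increment into the drift sum $\sum a(m)\|\epsilon(m)\|\le \eta(T+1)$ eventually, the martingale tail, and the interpolation remainder. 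This is a more elementary, self-contained route; the paper's citation-based route is shorter and reuses the general asymptotic-pseudotrajectory machinery. One small imprecision in your write-up: as displayed, the martingale contribution should be $\max_{m(s)\le j< m(s+T)}\bigl\| \sum_{m=m(s)}^{j} a(m)M^3_{m+1}\bigr\|$ rather than the norm of the full-window sum, since intermediate $t$ correspond to partial sums; this is harmless because, as you note, almost sure convergence of the series makes the tail partial sums uniformly small by the Cauchy criterion, but the inequality should be stated with the maximum over partial sums.
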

\begin{proof}
 It can be shown that $y_{n+1} = y_n + a(n)M^3_{n+1}$ satisfies the assumptions of Bena\"{i}m 
 \cite{Benaim96}. Hence the corresponding linearly interpolated trajectory tracks the solution
 to $\dot{y}(t) = 0$.
 The statement of the lemma then follows trivially.
\end{proof}

\begin{lemma} \label{equicont}
 For any $T > 0$, $\underset{s \to \infty}{\lim} 
 \underset{t \in [0, T]}{\sup} \lVert \overline{x}(s+t) - x^s(t) \rVert = 0$
 and $\underset{s \to \infty}{\lim}
  \underset{t \in [0, T]}{\sup} \lVert \widetilde{y}(s+t) - y^s(t) \rVert = 0$, $a.s.$
\end{lemma}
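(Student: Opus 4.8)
The plan is to prove the two assertions separately, each by the same standard "discrete Gronwall" argument used in the one-timescale theory (see \cite{Benaim96, BorkarBook}), adapting the bookkeeping to the two interpolated clocks $t(n)$ and $s(n)$. Consider first the faster iterate. Fix $T>0$ and $s\ge 0$; let $n$ be such that $t(n)\le s$ and pick the largest index $m$ with $t(m)\le s+T$. For $t\in[0,T]$ one has, by the very definition of $x^s$ and by telescoping the recursion for $\overline x$, an expression of the form
\begin{equation}\nonumber
\overline x(s+t)-x^s(t) \;=\; \big(\overline x(t(n))-\overline x(s)\big)\;+\;\sum a(k)M^1_{k+1}\;+\;\big(\text{interpolation error terms}\big),
\end{equation}
where the martingale sum ranges over the indices straddling $[s,s+t]$. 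The first bracket is $O(a(n))\to 0$ since $s-t(n)\le a(n)$ and the $u_k$ are bounded on the (a.s.\ finite, by $(A4)$) trajectory; the interpolation error is likewise $O(\max_k a(k))$; and the martingale tail vanishes a.s.\ by Lemma~\ref{noiseconv}, since $\zeta^1_n$ converges, so its increments over any window shrinking to the tail go to zero uniformly in windows of bounded $t(\cdot)$-length. Combining these and invoking the pointwise boundedness of $h$ from $(A1)$ together with $(A4)$ to control $\|u_k\|$, a discrete Gronwall inequality closes the estimate uniformly on $[0,T]$.

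For the slower iterate the argument is structurally identical but run on the $s(n)$ clock: $\widetilde y(s+t)-y^s(t)$ telescopes into an initial-segment term $O(b(n))$, an interpolation error $O(\max_k b(k))$, and the martingale tail $\zeta^2_{n}$-increments, which vanish a.s.\ by Lemma~\ref{noiseconv}; boundedness of $v_k$ comes from $(A1)$ and $(A4)$, and again Gronwall finishes the job. No appeal to $(A5)$ or to the structure of $G$ is needed here — only $(A1)$--$(A4)$ — which is consistent with the remark in the text that Theorems~\ref{fastx} and~\ref{slowy} require only $(A1)$--$(A5)$.

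The main obstacle, and the only place requiring care, is making the phrase "the martingale tail over a window of bounded clock-length goes to zero uniformly in $s$" precise. The subtlety is that as $s\to\infty$ the number of summands in the window $[s,s+T]$ does not shrink; what saves us is that the partial sums $\zeta^i_n$ form a \emph{convergent} sequence (Lemma~\ref{noiseconv}), so for any $\eta>0$ there is $N_\eta$ with $\sup_{p,q\ge N_\eta}\|\zeta^i_p-\zeta^i_q\|<\eta$ a.s., and every window with left endpoint beyond $t(N_\eta)$ (resp.\ $s(N_\eta)$) contributes at most $\eta$. I would state this as a short sub-claim and then feed it, together with the elementary initial-segment and interpolation bounds, into the Gronwall step; the rest is routine and I would not grind through the constants.
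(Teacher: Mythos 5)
Your decomposition and your use of Lemma~\ref{noiseconv} are essentially the paper's proof: the paper also reduces the statement to the grid points $t(n)$ (resp.\ $s(n)$), expands $\overline{x}(t(n+m))$ from the recursion and $x^{t(n)}(t(n+m)-t(n))$ as the integral of $\overline{u}$, and concludes that the difference is exactly the martingale sum $\sum_{k=0}^{m-1}a(n+k)M^1_{n+k+1}$, which vanishes uniformly over windows of length $T$ by the Cauchy property of the convergent series $\zeta^1_n$ (same for $y$ on the $s(n)$ clock). The one point you should delete is the closing appeal to a discrete Gronwall inequality: it is neither needed nor available. The comparison trajectory $x^s$ is by definition driven by the same piecewise-constant control $\overline{u}(s+\cdot)$ that drives $\overline{x}$, so the drift terms cancel exactly, $\sum_{k=0}^{m-1}a(n+k)\overline{u}(t(n+k))=\int_{t(n)}^{t(n+m)}\overline{u}(z)\,dz$, and there is no state-dependent mismatch left to bootstrap; if a Gronwall step were genuinely required (as it is when one compares $\overline{x}$ with a solution of the mean-field dynamics), it would demand a Lipschitz property of $h$ that $(A1)$ does not supply, since $h$ is only upper semi-continuous and set-valued. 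Your boundary and interpolation terms are of order $a(n)\lVert u_n\rVert + a(n)\lVert M^1_{n+1}\rVert$; the first is controlled by $(A1)$ and $(A4)$ as you say, and the second tends to zero because it is an increment of the convergent series $\zeta^1_n$, which is exactly the sub-claim you formulate, so apart from the superfluous Gronwall step the argument is correct and matches the paper's.
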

\begin{proof}
 In order to prove the above lemma, it enough to prove the following:
 \begin{equation} \nonumber
  \begin{split}
  & \underset{t(n) \to \infty}{\lim} \ 
 \underset{0 \le t(n+m) - t(n) \le T}{\sup} \lVert \overline{x}(t(n+m)) - x^{t(n)}(t(n+m) - t(n)) \rVert = 0 \mbox{ and} \\
  &\underset{s(n) \to \infty}{\lim} \ 
 \underset{0 \le s(n+m) - s(n) \le T}{\sup} \lVert \widetilde{y}(s(n+m)) - y^{s(n)}(s(n+m) - s(n)) \rVert = 0 \ a.s.
  \end{split}
 \end{equation}
 
Note the following:
\begin{equation}\nonumber
 \begin{split}
  &\overline{x}(t(n+m))  = \overline{x}(t(n)) + \sum_{k = 0}^{m-1} \left[ a(n+k) \left( 
\overline{u}(t(n+k)) + M^1_{n+k+1} \right) \right],  \\
  &x^{t(n)}(t(n+m) - t(n))  = \overline{x}(t(n)) + \int_{0}^{t(n+m) - t(n)} \overline{u}(t(n) + z)\  dz ,
 \end{split}
\end{equation}
\begin{equation}\label{equicont1}
 x^{t(n)}(t(n+m) - t(n)) = \overline{x}(t(n)) + \int_{t(n)}^{t(n+m)} \overline{u}(z)\  dz.
\end{equation}
%%%%%%%%%%%%%%%%%%%%%%%%%%%%%%%%%%%%%%%%%%%%%%%%%%%%%%%%%%%%%%%%%%%%%%%%%%%%%%%%%%%%%%%%%%%%%%
From (\ref{equicont1}), we get,
\begin{multline} \nonumber
\lVert \overline{x}(t(n+m)) - x^{t(n)}
\left( t(n+m) - t(n) \right) \rVert = \\
  \left\lVert \sum_{k = 0}^{m-1} a(n+k) \overline{u}(t(n+k)) 
  - \sum_{k = 0}^{m-1} \int_{t(n+k)}^{t(n+k+1)} \overline{u}(z)\  dz
+ \sum_{k = 0}^{m-1} a(n+k) M^1_{n+k+1}  \right\rVert.
\end{multline}
The $R.H.S.$ of the above equation equals $\left\lVert \sum_{k = 0}^{m-1} a(n+k) M^1_{n+k+1}  \right\rVert$
as \[\sum_{k = 0}^{m-1} a(n+k) \overline{u}(t(n+k)) \ = 
\sum_{k = 0}^{m-1} \int_{t(n+k)}^{t(n+k+1)} \overline{u}(z)\  dz.\]
Since $ \zeta^1 _n 
  :=  \sum_{m=0}^{n-1} a(m) M^1_{m+1}$, $n \ge 1$,
converges $a.s.$, the first part of claim follows.
\\ \indent
The second part, for the $y$ iterates, can be similarly proven.
\end{proof}

From assumptions $(A1)$ and $(A4)$ it follows that  
$\{ x^r(\cdotp) \ |\  r \ge 0\}$ and $\{ y^r(\cdotp) \ |\  r \ge 0\}$ are equicontinuous 
and pointwise bounded
families of functions. By the Arzela-Ascoli theorem they are relatively compact in $C([0, \infty), 
\mathbb{R}^d)$ and $C([0, \infty), \mathbb{R}^k)$ respectively. From lemma ~\ref{equicont} it
then follows that $\{\overline{x}(r + \cdotp) \ |\  r \ge 0 \}$ and 
$\{\widetilde{y}(r + \cdotp) \ |\  r \ge 0 \}$ are also relatively compact, 
see (\ref{xbar}) and (\ref{ytilde}) for the 
definitions of $\overline{x}(\cdotp)$ and $\widetilde{y}(\cdotp)$, respectively.

\subsection{Convergence in the faster timescale}
The following theorem
and its proof are similar to
Theorem $2$ from Chapter $5$ of Borkar \cite{BorkarBook}. 
We present a proof for the sake of completeness. %\cite{}

\begin{theorem} \label{fastx}
 Almost surely, every limit point of $\{ \overline{x}(r+ \cdotp) \ |\  r \ge 0\}$ in $C([0, \infty), \mathbb{R}^d)$
 is of the form $x(t) = x(0) + \int_0^t u(z) \ dz$, where $u$ is a measurable function
 such that $u(t) \in h(x(t), y(0))$, $t \ge 0$, for some fixed $y(0) \in \mathbb{R}^k$.
\end{theorem}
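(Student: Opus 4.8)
The plan is to follow the standard differential-inclusion limiting argument (as in Borkar's book, Ch.~5, Thm.~2, and Bena\"{i}m--Hofbauer--Sorin), adapted to the two-timescale setting: reduce from $\overline x(r+\cdot)$ to the ``exact'' fast-timescale solutions $x^r(\cdot)$, extract a convergent subsequence together with a convergent subsequence of $\overline y(r+\cdot)$, identify the driving function as a weak $L^2$ limit, and finally push the set-membership $\overline u\in h(\cdot,\cdot)$ through to the limit using upper semicontinuity. First I would fix a sample point in the full-probability event on which Lemma~\ref{equicont}, Lemma~\ref{fasty}, Lemma~\ref{noiseconv} and $(A4)$ all hold. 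By Lemma~\ref{equicont}, $\sup_{t\in[0,T]}\lVert \overline x(r+t)-x^r(t)\rVert\to 0$ as $r\to\infty$ for every $T$, so $\{\overline x(r+\cdot)\}$ and $\{x^r(\cdot)\}$ have the same limit points. Given a limit point $x(\cdot)$, choose $r_n\to\infty$ with $x^{r_n}(\cdot)\to x(\cdot)$ (equivalently $\overline x(r_n+\cdot)\to x(\cdot)$) uniformly on compacts; using the relative compactness of $\{\overline y(r+\cdot)\}$ noted after Lemma~\ref{equicont} together with Lemma~\ref{fasty}, pass to a further subsequence so that $\overline y(r_n+\cdot)$ converges in $C([0,\infty),\mathbb{R}^k)$ to a function that is identically equal to some $y(0)\in\mathbb{R}^k$.

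Next I would identify the driving function. On each $[0,T]$ we have $x^{r_n}(t)=\overline x(r_n)+\int_0^t\overline u(r_n+z)\,dz$, and by $(A1)$ and $(A4)$ the functions $z\mapsto\overline u(r_n+z)$ are bounded uniformly in $n$ on $[0,T]$. Extracting once more and diagonalizing over $T=1,2,\dots$, I may assume $\overline u(r_n+\cdot)\rightharpoonup u(\cdot)$ weakly in $L^2([0,T],\mathbb{R}^d)$ for every $T$. Passing to the limit in the integral identity gives $x(t)=x(0)+\int_0^t u(z)\,dz$, so $u$ is a measurable function with $\dot x(t)=u(t)$ a.e.; it remains only to verify $u(t)\in h(x(t),y(0))$ for a.e.\ $t$.

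This last point is the crux. For $z\in[0,T]$ let $k=k_n(z)$ be the index with $r_n+z\in[t(k),t(k+1))$, so that $\overline u(r_n+z)=u_{k_n(z)}\in h(x_{k_n(z)},y_{k_n(z)})$. Since the step sizes tend to $0$ and the weighted noise $a(n)M^1_{n+1}\to 0$ (a consequence of Lemma~\ref{noiseconv}), one has $\lVert x_{k_n(z)}-\overline x(r_n+z)\rVert\to 0$ and, similarly, $\lVert y_{k_n(z)}-\overline y(r_n+z)\rVert\to 0$, both uniformly in $z\in[0,T]$ as $n\to\infty$; combined with $\overline x(r_n+\cdot)\to x(\cdot)$ and $\overline y(r_n+\cdot)\to y(0)$ uniformly, this yields $(x_{k_n(z)},y_{k_n(z)})\to(x(z),y(0))$ uniformly on $[0,T]$. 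For $\delta>0$ set $h^\delta(p):=\overline{co}\,h(\overline{B}_\delta(p))$; since $h$ is locally bounded with compact convex values and is upper semicontinuous, $\bigcap_{\delta>0}h^\delta(p)=h(p)$ for every $p\in\mathbb{R}^{d+k}$. Hence for each $\delta>0$ there is $N(\delta)$ with $\overline u(r_n+z)\in h^\delta(x(z),y(0))$ for all $z\in[0,T]$ and all $n\ge N(\delta)$, i.e.\ $\overline u(r_n+\cdot)$ lies in the convex set $\mathcal C_\delta:=\{w\in L^2([0,T],\mathbb{R}^d):w(z)\in h^\delta(x(z),y(0))\ \text{a.e.}\}$, which is strongly closed, hence weakly closed. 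As $\overline u(r_n+\cdot)\rightharpoonup u$, we get $u\in\mathcal C_\delta$; letting $\delta\downarrow 0$ along a countable sequence gives $u(z)\in h(x(z),y(0))$ a.e.\ on $[0,T]$, and then on $[0,\infty)$. Since everything was carried out on a full-probability event, this proves the theorem.

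The main obstacle is the identification step in the previous paragraph: transferring $\overline u(r_n+z)\in h(x_{k_n(z)},y_{k_n(z)})$ --- where $k_n(z)$ is dictated by the faster clock while the state is read off the coupled iteration --- to $u(z)\in h(x(z),y(0))$ in the limit. This needs three ingredients working together: the timescale bookkeeping controlling $\lVert x_{k_n(z)}-\overline x(r_n+z)\rVert$ and $\lVert y_{k_n(z)}-\overline y(r_n+z)\rVert$ uniformly in $z$; Lemma~\ref{fasty}, which forces the slow component of the limit to be the \emph{constant} $y(0)$; and the $\delta$-inflation characterization of the Marchaud map $h$ together with weak closedness of the pointwise constraint in $L^2$. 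The remaining steps --- Arzel\`a--Ascoli relative compactness and weak-$L^2$ compactness of the driving terms --- are routine.
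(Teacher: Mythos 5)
Your proposal is correct, and its overall skeleton coincides with the paper's: reduce $\overline{x}(r+\cdot)$ to $x^r(\cdot)$ via Lemma~\ref{equicont}, extract a subsequence along which $\overline{x}(r_n+\cdot)$ converges, $\overline{y}(r_n+\cdot)$ converges to a constant $y(0)$ by Lemma~\ref{fasty}, and $\overline{u}(r_n+\cdot)$ converges weakly in $L_2$, then pass to the limit in the integral identity. Where you genuinely diverge is the final identification $u(t)\in h(x(t),y(0))$. The paper invokes the Banach--Saks device: a further subsequence whose Ces\`aro averages converge strongly, hence a.e., so that at an a.e.\ point $t_0$ one combines $d\bigl(\overline{u}(r_{n(k)}+t_0),h(x(t_0),y(0))\bigr)\to 0$ (from upper semi-continuity) with convexity and compactness of $h(x(t_0),y(0))$ to place the limit of the averages inside that set. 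You instead inflate the map, setting $h^\delta(p)=\overline{co}\,h(\overline{B}_\delta(p))$, use the uniform convergence $(x_{k_n(z)},y_{k_n(z)})\to(x(z),y(0))$ on $[0,T]$ to get $\overline{u}(r_n+\cdot)$ eventually into the convex, strongly (hence weakly) closed set $\mathcal{C}_\delta$ of $L_2$-selections of $h^\delta(x(\cdot),y(0))$, and then shrink $\delta$ using $\bigcap_{\delta>0}h^\delta(p)=h(p)$, which indeed follows from upper semi-continuity, local boundedness, and the convexity of $h(p)$ (so that $\overline{co}$ of a small neighborhood of $h(p)$ stays in a small closed neighborhood of $h(p)$). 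This is essentially the Aubin--Cellina ``convergence theorem'' route rather than Borkar's; it trades the a.e.-pointwise bookkeeping of Ces\`aro means for a once-and-for-all weak-closedness argument, and it requires the convergence of the arguments to be uniform in $z$ (which you correctly justify from the vanishing step sizes and Lemma~\ref{noiseconv}), whereas the paper only needs it at individual points $t_0$. Both are valid; yours is arguably cleaner at the cost of the extra envelope construction.
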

\begin{proof}
 Fix $T > 0$, then $\{ \overline{u}(r + t) \ |\  t \in [0, T] \}$, $r \ge 0$ can be viewed
 as a subset of $L_2 ([0,T], \mathbb{R}^d)$. From $(A1)$ and $(A4)$ it follows that
 the above is uniformly bounded and hence weakly relatively compact. Let $\{r(n)\}$ be a sequence such that
 the following hold:
 \begin{itemize}
  \item[(i)] $\underset{n \to \infty}{\lim} r(n) = \infty$.
  \item[(ii)] There exists some $x(\cdotp) \in C([0, \infty), \mathbb{R}^d)$ 
  such that $\overline{x}(r(n) + \cdotp) \to x(\cdotp)$
 in $C([0, \infty), \mathbb{R}^d)$. This is because $\{\overline{x}(r + \cdotp) \ | \ 
 r \ge 0\}$ is relatively compact in $C([0, \infty), \mathbb{R}^d)$.
  \item[(iii)]$\overline{y}(r(n) + \cdotp) \to y(\cdotp)$ in $C([0, \infty), \mathbb{R}^k)$
  for some $y \in C([0, \infty), \mathbb{R}^k)$. It follows from lemma~\ref{fasty}
  that $y(t) = y(0)$ for all $t \ge 0$.
  \item[(iv)] $\overline{u}(r(n) + \cdotp) \to u(\cdotp)$ weakly in $L_2 ([0,T], \mathbb{R}^d)$.
 \end{itemize}
 %%%%%%%%%%%%%%%%%%%%%%%%%%%%%%%%%%%%%%%%%%%%%%%%%%%%%%%%%%%%%%%%%%%%%%%%%%%%%%%%%%%%%%%%%%%%%%%%
 From lemma ~\ref{equicont}, it follows that $x^{r(n)}(\cdotp) \to x(\cdotp)$
 in $C([0, \infty), \mathbb{R}^d)$, and we have that
 $\int_0 ^t \overline{u}(r(n) + z) \ dz \ \to
 \int_0 ^t u(z) \ dz$ for $t \in [0,T]$. Letting $n \to \infty$ in
 \begin{equation}\nonumber
  x^{r(n)}(t) = x^{r(n)}(0) + \int_0 ^t \overline{u}(r(n) + z) \ dz ,\ t \in [0,T],
 \end{equation}
we get $x(t) = x(0) + \int_0 ^t u(z) \ dz, \ t \in [0,T]$. 
\\ \indent
Since $\overline{u}(r(n)+ \cdotp )
\to u(\cdotp)$ weakly in $L_2 ([0,T], \mathbb{R}^d)$, there exists
$\{n(k)\} \subset \{n\}$ such that $n(k) \uparrow \infty$ and
\begin{equation}\nonumber
 \frac{1}{N} \sum_{k=1}^{N} \overline{u}(r(n(k)) + \cdotp) \to u(\cdotp)
\end{equation}
strongly in $L_2 ([0,T], \mathbb{R}^d)$. Further, there exist $\{N(m)\} \subset \{N\}$
such that $N(m) \uparrow \infty$ and
\begin{equation} \label{aeconv}
 \frac{1}{N(m)} \sum_{k=1}^{N(m)} \overline{u}(r(n(k)) + \cdotp) \to u(\cdotp)
\end{equation}
$a.e.$ in $[0,T]$. 
\\ \indent
Define $[t] := max\{t(n) \ |\  t(n) \le t\}$. If we fix $t_0 \in [0, T]$
such that (\ref{aeconv}) holds, then $\overline{u}(r(n(k)) + t_0) \in 
h(\overline{x}([r(n(k)) + t_0]), \overline{y}([r(n(k)) + t_0]))$ for  $k \ge 1$. Since
$\underset{n(k) \to \infty}{\lim}
\lVert \overline{x}(r(n(k)) + t_0) - \overline{x}([r(n(k)) + t_0]) \rVert = 0$, it follows
that $\underset{k \to \infty}{\lim} \overline{x}([r(n(k)) + t_0]) = x(t_0)$, and similarly, we have that
$\underset{k \to \infty}{\lim} \overline{y}([r(n(k)) + t_0]) = y(0)$.
Since $h$ is upper semi-continuous it follows that $\underset{k \to \infty}{\lim}$
$d\left(\overline{u}(r(n(k)) + t_0) , h(x(t_0), y(0))\right) = 0$. The set $h(x(t_0), y(0))$ is compact and
convex, hence it follows from (\ref{aeconv}) that $u(t_0) \in h(x(t_0), y(0))$.
\end{proof}

\subsection{Convergence in the slower timescale}
\begin{theorem} \label{slowy}
 For any  $\epsilon > 0$, almost surely any limit point of $\{\widetilde{y}(r + \cdotp) \ |\  r \ge 0\}$
 in $C([0, \infty), \mathbb{R}^k)$ is of the form $y(t) = y(0) + \int _0 ^t v(z) \ dz$, where
 $v$ is a measurable function such that $v(t) \in N^{\epsilon}(G(y(t)))$, $t \ge 0$.
\end{theorem}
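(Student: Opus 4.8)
The plan is to mimic the proof of Theorem~\ref{fastx} and to supply, as the one genuinely new ingredient, the fact that on the faster timescale the $x$-iterate is asymptotically trapped inside $\lambda(y_n)$; once this is in hand, Lemmas~\ref{xclosetoay} and~\ref{seqto} turn the selection $v_n\in g(x_n,y_n)$ into a point of $N^\epsilon(G(y_n))$. \emph{Soft part.} Fix $T>0$ and regard $\{\widetilde v(r+t):t\in[0,T]\}_{r\ge 0}$ as a subset of $L_2([0,T],\mathbb R^k)$; by $(A1)$ and $(A4)$ it is norm bounded, hence weakly relatively compact. Choose $r(n)\to\infty$ so that (using the relative compactness noted after Lemma~\ref{equicont}) $\widetilde y(r(n)+\cdot)\to y(\cdot)$ in $C([0,\infty),\mathbb R^k)$ and $\widetilde v(r(n)+\cdot)\to v(\cdot)$ weakly in $L_2([0,T],\mathbb R^k)$. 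Then $\int_0^t\widetilde v(r(n)+z)\,dz\to\int_0^t v(z)\,dz$; since $y^{r(n)}(t)=\widetilde y(r(n))+\int_0^t\widetilde v(r(n)+z)\,dz$ and Lemma~\ref{equicont} forces $y^{r(n)}(\cdot)\to y(\cdot)$, we get $y(t)=y(0)+\int_0^t v(z)\,dz$ on $[0,T]$, and $T\uparrow\infty$ yields the claimed representation. It remains to show $v(t)\in N^\epsilon(G(y(t)))$ for a.e.\ $t$.

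\emph{The key claim.} I would first establish as a stand-alone fact that
\[
 d\bigl(x_n,\lambda(y_n)\bigr)\longrightarrow 0\quad a.s.
\]
Fix $\delta>0$. By $(A4)$ and the bound $\sup_{x\in A_y}\|x\|\le K(1+\|y\|)$ of $(A5)$, almost surely there is a compact $\mathcal K$ containing every $x_n$ and every $\lambda(y_n)$. For a window length $T>0$ and each $n$ let $n'=n'(n)$ be the largest index with $t(n)-t(n')\ge T$. Since $b(m)/a(m)\to 0$, $\sum_{m=n'}^{n}b(m)\le\bigl(\sup_{m\ge n'}b(m)/a(m)\bigr)\sum_{m=n'}^{n}a(m)\to 0$, and since $\zeta^2_\cdot$ of Lemma~\ref{noiseconv} is Cauchy, $y_{n'},\dots,y_n$ all stay within $o(1)$ of each other; hence along any subsequence on which $\overline y(t(n')+\cdot)$ converges, its limit is a \emph{constant}, say $y_\ast$, and $y_{n'(n)}\to y_\ast$ too. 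By Theorem~\ref{fastx} (via Lemmas~\ref{fasty} and~\ref{equicont}), every limit point of $\overline x(t(n')+\cdot)$ in $C([0,T],\mathbb R^d)$ is a solution of $\dot x\in h(x,y_\ast)$ starting inside $\mathcal K$. By $(A5)$ and Lemma~\ref{compactsasfn}, $\lambda(y_\ast)$ is an attractor for this inclusion and $\mathcal K$ is a fundamental neighborhood, so choosing $T\ge T(\delta/2)$ this limit point lies in $\overline{N^{\delta/2}}(\lambda(y_\ast))$ at time $T$. Since $\overline x(t(n')+T)$ differs from $x_n$ by $o(1)$ and every sub-subsequence admits such an extraction, all limit points of $\{x_n\}$ lie within $\delta/2$ (up to $o(1)$) of $\lambda(y_n)$; as $\delta$ was arbitrary and $\lambda$ is upper semicontinuous (so $\sup_{\xi\in\lambda(y_{n'})}d(\xi,\lambda(y_n))\to 0$), the claim follows.

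\emph{Conclusion.} Now run the Mazur/Banach--Saks argument as in Theorem~\ref{fastx}: pass to a subsequence $\{n(k)\}$ so that $\tfrac1{N(m)}\sum_{k=1}^{N(m)}\widetilde v(r(n(k))+\cdot)\to v(\cdot)$ a.e.\ on $[0,T]$, and fix such a $t_0$. Let $m_k$ be the iterate index active at slow time $r(n(k))+t_0$, so $\widetilde v(r(n(k))+t_0)=v_{m_k}\in g(x_{m_k},y_{m_k})$. Since $b(n)v_n\to 0$ ($(A1),(A2),(A4)$) and $b(n)M^2_{n+1}\to 0$ (Lemma~\ref{noiseconv}), the interpolation error vanishes and $y_{m_k}\to y(t_0)$; combined with the key claim and upper semicontinuity of $\lambda$, $d(x_{m_k},\lambda(y(t_0)))\to 0$. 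Extract $x_{m_{k_j}}\to x_0\in\lambda(y(t_0))$ and apply Lemmas~\ref{xclosetoay} and~\ref{seqto} (with $y_0=y(t_0)$) to get $g(x_{m_{k_j}},y_{m_{k_j}})\subseteq N^\epsilon(G(y(t_0)))$; a routine contradiction argument upgrades this from a subsequence to a tail, so $\widetilde v(r(n(k))+t_0)\in N^\epsilon(G(y(t_0)))$ for all large $k$. Since $N^\epsilon(G(y(t_0)))$ is convex and the finitely many remaining Ces\`aro terms carry vanishing weight, $v(t_0)\in\overline{N^\epsilon}(G(y(t_0)))$; applying the whole argument with $\epsilon/2$ gives $v(t_0)\in N^\epsilon(G(y(t_0)))$, completing the proof.

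The main obstacle is the key claim $d(x_n,\lambda(y_n))\to 0$: this is precisely where $(A5)$ (global attraction, Lyapunov stability, the linear bound on $A_y$, and upper semicontinuity of $\lambda$) is indispensable and where the timescale separation $b(n)/a(n)\to 0$ must be exploited to freeze $y$ over faster-timescale windows of arbitrary length. The two delicate points are that upper semicontinuity of $\lambda$ is only one-sided (we control $\sup_{\xi\in\lambda(y_n)}d(\xi,\lambda(y))$, not the reverse deviation) and that Lemma~\ref{compactsasfn} supplies the escape time $T(\delta/2)$ for each fixed limiting $y_\ast$ only — which is enough, since $y_\ast$ is held fixed throughout.
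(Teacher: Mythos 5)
Your proof is essentially the paper's proof, reorganized: the paper runs the same windowing argument (step back a fast-time interval $T_0$ supplied by Lemma~\ref{compactsasfn}, freeze $y$ at $y(t_0)$ via Lemma~\ref{fasty}, invoke Theorem~\ref{fastx} to land in $N^\delta(\lambda(y(t_0)))$, then apply Lemmas~\ref{xclosetoay} and~\ref{seqto}) but does so \emph{in situ} as a contradiction argument at the fixed density point $t_0$, rather than first isolating a standalone tracking claim. Both versions then finish with the same Banach--Saks/convexity step, and your explicit $\epsilon/2$ trick to pass from $\overline{N^{\epsilon}}$ to $N^{\epsilon}$ is a point the paper glosses over.

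One caution: your key claim $d(x_n,\lambda(y_n))\to 0$ is \emph{not} established by the argument you give. The windowing argument places $x_n$ near $\lambda(y_\ast)$ for the limit $y_\ast$ of the frozen $y$-segment; transferring this to $\lambda(y_n)$ requires $\lambda(y_\ast)\subseteq N^{\epsilon}(\lambda(y_n))$, i.e.\ \emph{lower} semicontinuity of $\lambda$, which $(A5)$ does not provide (your parenthetical ``$\sup_{\xi\in\lambda(y_{n'})}d(\xi,\lambda(y_n))\to 0$'' also does not follow from upper semicontinuity at a moving base point). This is harmless here only because your conclusion step never actually uses the claim in that form: what you need, and what the windowing argument proves directly, is $d(x_{m_k},\lambda(y(t_0)))\to 0$ along the subsequence with $y_{m_k}\to y(t_0)$ --- closeness to $\lambda$ evaluated at the \emph{limit} point, which is exactly the statement the paper extracts. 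State the claim that way and the proof is complete.
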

\begin{proof}
Fix $T > 0$. As before let
$\{r(n)\}_{n \ge 1}$ be a sequence such that the following hold:
\begin{itemize}
 \item[(i)] $\underset{n \to \infty}{\lim} r(n) = \infty$.
 \item[(ii)] $\widetilde{y}(r(n) + \cdotp) \to y(\cdotp)$
 in $C([0, \infty), \mathbb{R}^k)$, where $y(\cdotp) \in C([0, \infty), \mathbb{R}^k)$.
 \item[(iii)] $\widetilde{v}(r(n) + \cdotp) \to v(\cdotp)$ weakly in $L_2 ([0,T], \mathbb{R}^k)$.
\end{itemize}
Also, as before,
 we have the following:
 \begin{itemize}
  \item[(i)] There exists $\{n(k)\} \subseteq \{n\}$ such that
 $\frac{1}{N} \sum_{k=1}^{N} \widetilde{v}(r(n(k)) + \cdotp) \to v(\cdotp)
$
strongly in $L_2 ([0,T], \mathbb{R}^d)$ as $N \to \infty$.
\item[(ii)] There exist $\{N(m)\} \subset \{N\}$
such that $N(m) \uparrow \infty$ and
\begin{equation} \label{aeconv1}
 \frac{1}{N(m)} \sum_{k=1}^{N(m)} \widetilde{v}(r(n(k)) + \cdotp) \to v(\cdotp)
\end{equation}
 $a.e.$ on $[0,T]$.
 \end{itemize}
 Define $[s]' := max\{s(n) \ |\  s(n) \le s\}$. Construct a sequence $\{m(n)\}_{n \ge 1}$ 
 $\subseteq \mathbb{N}$ such that $s(m(n)) = [r(n) + t_0]'$ for each $n \ge 1$.
Observe that $\overline{y}(t(m(n)))
 = \widetilde{y}(s(m(n)))$ and $\widetilde{v}(r(n) + t_0) \in 
 g(\overline{x}(t(m(n))), \overline{y}(t(m(n))))$. 
 %%%%%%%%%%%%%%%%%%%%%%%%%%%%%%%%%%%%%%%%%%%%%%%%%%%%%%%%%%%%%%%%%%%%%%%%%%%%%%%%%%%%%%%
 \\ \indent
 Choose $t_0 \in (0, T)$ such that (\ref{aeconv1}) is satisfied.
 If we show that $\exists \ N$ such that
 for all $n \ge N$, $g(\overline{x}(t(m(n))), \overline{y}(t(m(n)))) \subseteq N^\epsilon 
 (G(y(t_0)))$ then (\ref{aeconv1}) implies that $v(t_0) \in \overline{N^\epsilon} (G(y(t_0)))$.
 \\ \indent
 It remains to show the existence of such a $N$.
 We present a proof by contradiction. 
 We may assume without loss of generality that
 for each $n \ge 1$, $g(\overline{x}(t(m(n))), \overline{y}(t(m(n)))) \nsubseteq N^\epsilon 
 (G(y(t_0)))$, $i.e.$,
 $\exists$ $\gamma_n \in g(\overline{x}(t(m(n))), \overline{y}(t(m(n))))$
 such that $\gamma_n \notin N^\epsilon (G(y(t_0)))$.
 Let $S_1$ be the set on which $(A4)$ is satisfied and $S_2$ be the set on which 
 lemma~\ref{noiseconv} holds. Clearly $P(S_1 \cap S_2) = 1$. For each $\omega \in S_1 \cap S_2$,
 $\exists$ $R(\omega) < \infty$ such that $\sup_n \lVert x_n (\omega) + y_n (\omega)  \rVert 
 \le R(\omega)$ and $\sup_n \ K(1 + \lVert y_n (\omega) \rVert)  \le R(\omega)$. 
 In what follows we merely use $R$ and the dependence on $\omega$ (sample path)
 is understood to be implicit.
 %%%%%%%%%%%%%%%%%%%%%%%%%%%%%%%%%%%%%%%%%%%%%%%%%%%%%%%%%%%%%%%%%%%%%%%%%%%%%%%%%%%%%%%%%%
 From lemma ~\ref{compactsasfn} it
 follows that corresponding to $\dot{x}(t) \in h(x(t), y(t_0))$ and some $\delta > 0$
 there exists $T_0$, possibly dependent on $R$, such that
 for all $t \ge T_0$, $\Phi_t (x_0) \in N^\delta (\lambda(y(t_0)))$ for all $x_0$ 
 $\in \overline{B}_R(0)$.
 \\ \indent
 We construct a new sequence $\{l(n)\}_{n \ge 1}$ from $\{m(n)\}_{n \ge 1}$ such that
 $t(l(n)) = min\{ t(m) \ |\  |t(m(n)) - t(m)| \le T_0\}$. Since
$\{\overline{x}(r + \cdotp) \ |\  r \ge 0 \}$ is relatively compact in $C([0, \infty), \mathbb{R}^d)$,
  it follows that
 $\overline{x}(t(l(n)) + \cdotp) \to x(\cdotp)$ in $C([0,T_0], \mathbb{R}^d)$.
 From lemma~\ref{fasty} we can conclude that 
 $\overline{y}(t(l(n)) + \cdotp) \to y( \cdotp)$ in $C([0,T_0], \mathbb{R}^k)$, where
 $y(t) = y(t_0)$ for all $t \in [0, T_0]$. lemma~\ref{fasty} only asserts that
 the limiting function is a constant, we recognize this constant to be $y(t_0)$ since
$\lVert \overline{y}(t(l(n)) + T_0) - \overline{y}(t(m(n))) \rVert \to 0$ 
 and $\overline{y}(t(l(n)) + T_0) \to y(t_0)$.
 Note that in the foregoing discussion 
 we can only assert the existence of convergent subsequences, again for the sake of
 convenience we assume that the sequences at hand are both convergent.
 It follows from Theorem~\ref{fastx} that $x(t) = x(0) + \int_0 ^t u(z) \ dz$,
 where $u(t) \in h(x(t), y(t_0))$. Since $x(0) \in \overline{B}_R(0)$ 
 it follows that $x(T_0) \in N^\delta (\lambda(y(t_0)))$. 
 \\ \indent
 From lemma ~\ref{xclosetoay}
 we get
 $g(x(T_0), y(t_0)) \subseteq N^\epsilon(G(y(t_0)))$.
 Since $\lVert \overline{x}(t(m(n))) - \overline{x}(t(l(n)) + T_0) \rVert \to 0$
 it  follows that $\overline{x}(t(m(n))) \to x(T_0)$. It follows from lemma ~\ref{seqto}
 that $\exists N$ such that for $n \ge N$, $g(\overline{x}(t(m(n))), \overline{y}(t(m(n))))
 \subseteq N^\epsilon (G(y(t_0)))$. This is a contradiction.
\end{proof}
A direct consequence of  the above theorem is that
almost surely any limit point of $\{\widetilde{y}(r + \cdotp) \ |\  r \ge 0\}$
 in $C([0, \infty), \mathbb{R}^k)$ is of the form $y(t) = y(0) + \int _0 ^t v(z) \ dz$, where
 $v$ is a measurable function such that $v(t) \in G(y(t))$, $t \ge 0$.

\subsection{Main result}\label{mainresult}

\begin{theorem}\label{main}
 Under assumptions $(A1)-(A6)$, almost surely the set of accumulation points is given by
 \begin{equation} \label{maineq}
   \left\{(x,y) \ | \ \underset{n \to \infty}{\underline{\lim}} \ d\left( (x,y), (x_n, y_n) \right)
 =0 \right\}
 \subseteq \underset{y \in A_0}{\bigcup} \left\{(x,y) \ | \ x \in \lambda(y) \right\}.
 \end{equation}
\end{theorem}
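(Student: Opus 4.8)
The plan is to argue in two steps: first that $y_n \to A_0$ almost surely, and then, using this, that every accumulation point $(x^*,y^*)$ of $\{(x_n,y_n)\}$ satisfies $x^*\in\lambda(y^*)$. Work on the almost sure event on which $(A4)$, Lemma~\ref{noiseconv}, Lemma~\ref{fasty}, Lemma~\ref{equicont}, Theorem~\ref{fastx} and Theorem~\ref{slowy} (together with the remark following it) all hold, and on which the families $\{\overline{x}(r+\cdot):r\ge0\}$, $\{\overline{y}(r+\cdot):r\ge0\}$, $\{\widetilde{y}(r+\cdot):r\ge0\}$ are relatively compact. Fix a sample point in this event and a constant $R<\infty$ with $\sup_n(\lVert x_n\rVert+\lVert y_n\rVert)\le R$ and $\sup_n K(1+\lVert y_n\rVert)\le R$; since the piecewise-linear interpolants take values in the convex hull of $\{y_n\}$, every value of $\widetilde{y}$ and of $\overline{y}$ lies in $\overline{B}_R(0)$.

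\textbf{Step 1: $y_n\to A_0$.} By Lemma~\ref{gismarchaud} $G$ is Marchaud, and by $(A6)$ $A_0$ is globally attracting and Lyapunov stable for $\dot y\in G(y)$; hence Lemma~\ref{compactsasfn} applies, so $A_0$ is an attractor and every compact set containing it is a fundamental neighborhood. Fix $\delta>0$ and a constant $R_0\ge R$ with $A_0\subseteq\overline{B}_{R_0}(0)$, and let $\Phi^G$ be the semiflow of $\dot y\in G(y)$, so there is $T_1>0$ with $\Phi^G_t(\overline{B}_{R_0}(0))\subseteq N^\delta(A_0)$ for all $t\ge T_1$. Suppose $d(y_{n_k},A_0)\ge\delta$ along some $n_k\uparrow\infty$. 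Since $s(n_k)-T_1\to\infty$, relative compactness lets us pass to a subsequence with $\widetilde{y}(s(n_k)-T_1+\cdot)\to y(\cdot)$ in $C([0,\infty),\mathbb{R}^k)$; by the remark following Theorem~\ref{slowy}, $y(\cdot)$ solves $\dot y\in G(y)$, and $y(0)=\lim_k\widetilde{y}(s(n_k)-T_1)\in\overline{B}_{R_0}(0)$. Therefore $y(T_1)\in\Phi^G_{T_1}(y(0))\subseteq N^\delta(A_0)$, whereas $y(T_1)=\lim_k\widetilde{y}(s(n_k))=\lim_k y_{n_k}$ has $d(y(T_1),A_0)\ge\delta$, a contradiction. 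Hence $d(y_n,A_0)\to0$, and since $A_0$ is compact every accumulation point of $\{(x_n,y_n)\}$ has its $y$-component in $A_0$.

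\textbf{Step 2: the $x$-component lies in $\lambda(y^*)$.} Let $(x^*,y^*)$ be an accumulation point, say $(x_{n_k},y_{n_k})\to(x^*,y^*)$ with $n_k\uparrow\infty$; by Step~1, $y^*\in A_0$. Suppose $x^*\notin\lambda(y^*)$. By $(A5)$, $\lambda(y^*)=A_{y^*}$ is globally attracting and Lyapunov stable for $\dot x\in h(x,y^*)$; the map $x\mapsto h(x,y^*)$ is Marchaud (convexity, compactness, linear growth and upper semi-continuity all follow from $(A1)$ for the fixed second argument), so Lemma~\ref{compactsasfn} applies, $A_{y^*}$ is compact, and $d(x^*,A_{y^*})=:3\delta>0$. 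Pick $R_1\ge R$ with $A_{y^*}\subseteq\overline{B}_{R_1}(0)$ (possible since $\sup_{x\in A_{y^*}}\lVert x\rVert\le K(1+\lVert y^*\rVert)<\infty$ by $(A5)$), and let $T_2>0$ be such that $\Phi^{y^*}_t(\overline{B}_{R_1}(0))\subseteq N^\delta(A_{y^*})$ for all $t\ge T_2$, where $\Phi^{y^*}$ denotes the semiflow of $\dot x\in h(x,y^*)$.

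Now look at the faster-timescale window of length $T_2$ ending at $t(n_k)$. As $t(n_k)-T_2\to\infty$, relative compactness and Lemma~\ref{fasty} let us pass to a subsequence with $\overline{x}(t(n_k)-T_2+\cdot)\to x(\cdot)$ in $C([0,\infty),\mathbb{R}^d)$ and $\overline{y}(t(n_k)-T_2+\cdot)\to y(\cdot)$ in $C([0,\infty),\mathbb{R}^k)$, where $y(\cdot)$ is constant. Evaluating at $T_2$ gives $y(T_2)=\lim_k\overline{y}(t(n_k))=\lim_k y_{n_k}=y^*$, so $y(\cdot)\equiv y^*$, and then Theorem~\ref{fastx} shows $x(\cdot)$ is an absolutely continuous solution of $\dot x\in h(x,y^*)$. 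Since $x(0)=\lim_k\overline{x}(t(n_k)-T_2)\in\overline{B}_{R_1}(0)$, we get $x(T_2)\in\Phi^{y^*}_{T_2}(x(0))\subseteq N^\delta(A_{y^*})$; but $x(T_2)=\lim_k\overline{x}(t(n_k))=x^*$, so $d(x^*,A_{y^*})<\delta$, contradicting $d(x^*,A_{y^*})=3\delta$. Hence $x^*\in\lambda(y^*)$, and since $y^*\in A_0$, $(x^*,y^*)$ lies in $\bigcup_{y\in A_0}\{(x,y):x\in\lambda(y)\}$, which is exactly (\ref{maineq}). The crux is Step~2: one must keep the slow variable frozen at $y^*$ over an entire fast-timescale window $[t(n_k)-T_2,t(n_k)]$ — which is what Lemma~\ref{fasty} provides — so that $\overline{x}$ tracks the \emph{autonomous} inclusion $\dot x\in h(x,y^*)$ and the attractor property $(A5)$ can be invoked; here $T_2$ is the entry time into $N^\delta(A_{y^*})$ from $\overline{B}_{R_1}(0)$, finite only because the iterates are almost surely bounded, and the care lies in choosing $T_2$ after fixing the sample path and then extracting limits along the shifted windows — the same device already used in the proof of Theorem~\ref{slowy}.
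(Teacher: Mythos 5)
Your proof is correct and follows the route the paper intends: the paper's own ``proof'' is a one-line assertion that the result follows from Theorems~\ref{fastx} and~\ref{slowy}, and your two steps supply exactly the missing details --- Lemma~\ref{compactsasfn} applied to $\dot y\in G(y)$ to push $y_n$ into $A_0$, and the backward-window device (already used in the proof of Theorem~\ref{slowy} with the sequence $\{l(n)\}$) applied to $\dot x\in h(x,y^*)$ to push $x_{n_k}$ into $\lambda(y^*)$. No gaps; your write-up is in fact more complete than the paper's.
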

\begin{proof}
 The statement follows directly from Theorems ~\ref{fastx} and ~\ref{slowy}.
\end{proof}
Note that assumption $(A6)$ allows us to narrow the set of interest. If $(A6)$ does not hold then
we can only conclude that the $R.H.S.$ of (\ref{maineq}) is
 $\underset{y \in \mathbb{R}^k}{\bigcup} \left\{(x,y) \ | \ x \in \lambda(y) \right\}$.
 On the other hand if $(A6)$ holds and $A_0$ consists of a single point, say $y_0$, then the 
 $R.H.S.$ of (\ref{maineq}) is $\{(x,y_0) \ | \ x \in \lambda(y_0) \}$. Further, if 
 $\lambda(y_0)$ is of cardinality one then the $R.H.S.$ of (\ref{maineq}) is just $(\lambda(y_0), y_0)$.
 \paragraph{}
\textit{Remark:}It may be noted that all proofs and conclusions in this paper will go through if 
$(A1)$ is weakened to let  $g$ be upper semi-continuous and $g(x, \cdotp)$ be Marchaud on $\mathbb{R}^k$ for each fixed $x \in \mathbb{R}^d$.
 
 \section{Application: An SA algorithm to solve the Lagrangian dual problem} \label{Lagrange}
 Let $f: \mathbb{R}^d \to \mathbb{R}$ and $g: \mathbb{R}^d \to \mathbb{R}^k$ be two given
 functions. We want to minimize $f(x)$ subject to the condition that $g(x) \le 0$ (every
 component of $g(x)$ is non-positive). This problem can be stated in the following primal form:
 \begin{equation}\label{primal}
  \underset{x \in \mathbb{R}^d}{inf} \ \underset{\substack{\mu \in \mathbb{R}^k \\ \mu \ge 0}}{sup}
  \left( f(x) + \mu ^{\mathsmaller T} g(x) \right).
 \end{equation}
Let us consider the following two timescale $SA$ algorithm to solve the primal (\ref{primal}):
\begin{equation}\label{primalSA}
\begin{split}
\mu_{n+1} = \mu_{n} + a(n) \left[ \nabla_{\mu}  \left( f(x_n) + \mu_n ^{\mathsmaller T} g(x_n) \right) + M^1 _{n+1} \right], \\
 x_{n+1} = x_n - b(n) \left[ \nabla_{x} \left( f(x_n) + \mu_n ^{\mathsmaller T} g(x_n) \right) + M^2 _{n+1} \right].
 \end{split}
\end{equation}
where,
$a(n), b(n) > 0$, 
$\sum_{n \ge 0} a(n) = \sum_{n \ge 0} b(n) = \infty$, $\sum_{n \ge 0} a(n) ^2 < \infty$,
$\sum_{n \ge 0} b(n) ^2 < \infty$ and $\frac{b(n)}{a(n)} \to 0$. 
Without loss of generality assume that $\underset{n}{\text{sup }} a(n),\ b(n) \le 1$.
The sequences $\{M^1 _{n}\}_{n \ge 1}$
and $\{M ^2 _{n} \}_{n \ge 1}$ are suitable martingale difference
noise terms.
\paragraph{}
Suppose there exists $x_0 \in \mathbb{R}^d$ such that $g(x_0) \ge 0$, then $\mu = (\infty, \dots, \infty)$
maximizes $f(x_0) + \mu ^{\mathsmaller T} g(x_0)$. 
With respect to the faster timescale ($\mu$) iterates the slower timescale ($x$) iterates
can be viewed as being ``quasi-static'', see \cite{BorkarBook} for more details.
It then follows from the aforementioned observation that the 
$\mu$ iterates cannot be guaranteed to be stable. In other words, we cannot use 
(\ref{primalSA}) to solve the primal problem.
\paragraph{}
If strong duality holds then solving (\ref{primal}) is equivalent to solving its dual
given by:
\begin{equation}\label{dual}
 \underset{\substack{\mu \in \mathbb{R}^k \\ \mu \ge 0}}{sup} \ 
 \underset{x \in \mathbb{R}^d}{inf} 
  \left( f(x) + \mu ^{\mathsmaller T} g(x) \right) . 
\end{equation}
Further, the two timescale scheme to solve the dual problem is given by:
\begin{equation}\label{dualSA}
\begin{split}
x_{n+1} = x_n - a(n) \left[ \nabla_{x} \left( f(x_n) + \mu_n ^{\mathsmaller T} g(x_n) \right) + M^2 _{n+1} \right], \\
\mu_{n+1} = \mu_{n} + b(n) \left[ \nabla_{\mu}  \left( f(x_n) + \mu_n ^{\mathsmaller T} g(x_n) \right) + M^1 _{n+1} \right].
 \end{split}
\end{equation}
Note that (\ref{dualSA}) is obtained by flipping the timescales of (\ref{primalSA}). Strong
duality can be enforced if we assume the following:
\begin{enumerate}
 \item[(S1)] $f(x) = x^{\mathsmaller T} Q x + b^{\mathsmaller T}x + c$, where $Q$ is a positive
 semi-definite $d \times d$ matrix, $b \in \mathbb{R}^d$ and $c \in \mathbb{R}$.
 \item[(S2)] $g = A$, where $A$ is a $k \times d$ matrix.
 \item[(S3)] $f$ is bounded from below.
\end{enumerate}
The reader is referred to
Bertsekas \cite{bertsekas} for further details. 
For the purposes of this section we assume the following:
\begin{itemize}
 \item[] $(S1)-(S3)$ are satisfied.
 \item[$(A3)'$] $\sum_{n \ge 0} a(n) M^i _{n+1} < \infty$ a.s., where $i = 1,\ 2$.
\end{itemize}
The sole purpose of $(A3)$ in Section~\ref{assumptions} is to ensure the
convergence of the martingale noise terms \textit{i.e.}, $(A3)'$ holds.
It is clear that (\ref{dualSA}) satisfies $(A1)$ since $(S1)-(S3)$ hold 
while $(A2)$ is the step size assumption that is enforced.
\paragraph{}
The stability of the $\mu$ iterates in (\ref{dualSA}) directly 
follows from strong duality and $(A3)'$.
The $\mu$ iterates are ``quasi-static'' with respect to
the $x$ iterates. Further, 
since $f(x) + \mu _0 ^{\mathsmaller T} g(x)$ is a 
convex function (from $(S1)$ and $(S2)$),
for a fixed $\mu _0$, $f(x) + \mu _0 ^{\mathsmaller T} g(x)$ achieves its
minimum ``inside'' $\mathbb{R}^d$. Hence, the stability of the $x$ iterates will follow from
that of the $\mu$ iterates and $(A3)'$.
In other words, 
(\ref{dualSA}) satisfies $(A1),\ (A2), \ (A3)' \ \& \ (A4)$, see 
Section~\ref{assumptions} for the definitions of $(A1), (A2)$ and $(A4)$.
\paragraph{}
For a fixed $\mu _0$, the minimizers of $f(x) + \mu _0 ^{\mathsmaller T} g(x)$
constitute the global attractor of the o.d.e., 
$\dot{x}(t) = - \nabla_x (f(x) + \mu _0 ^{\mathsmaller T} g(x))$. 
Our paradigm comes in handy when this attractor set is \textit{NOT} \textit{singleton}, which is 
generally the case. 
In other words, we can define the following set
valued map: $\lambda _m : \mathbb{R}^k \to \mathbb{R}^d$, where 
$\lambda _m (\mu_0)$  is the global attractor of
$\dot{x}(t) = - \nabla_x (f(x) + \mu _0 ^{\mathsmaller T} g(x))$.
\paragraph{}
Now we check that (\ref{dualSA}) satisfies $(A5)$. To do so 
it is enough to ensure that $\lambda _m$ is an
upper semi-continuous map. Recall that $\lambda _m (\mu)$ is the minimum set
of $f(x) + \mu ^{\mathsmaller T} g(x)$ for each $\mu \in \mathbb{R}^k$. Dantzig,
Folkman and Shapiro \cite{dantzig} studied the continuity of minimum sets
of continuous functions. 
A wealth of sufficient conditions can be found in \cite{dantzig}
which when satisfied by the functions 
guarantee ``continuity'' of the corresponding minimum sets. 
In our case since $(S1)-(S3)$ are
satisfied, \textit{Corollary I.2.3} of \cite{dantzig} guarantees upper
semi-continuity of $\lambda _m$. 
% Suppose strong duality is enforced by assumptions
% that are not $(S1)-(S3)$ then we may have to show the upper semi-continuity
% of $\lambda _m$ in some other way, possibly using some other theorem from \cite{dantzig}.
\paragraph{}
Since (A1)-(A5) are satisfied by (\ref{dualSA}), it follows from 
Theorems~\ref{fastx} \&~\ref{slowy} that:
\begin{itemize}
 \item[(I)] Almost surely every limit point of $\{ \overline{x}(r + \cdotp) \ |\ r \ge 0\}$ in 
 $C([0, \infty), \mathbb{R}^d)$ is of the form $x(t) = x(0) + \int _0 ^t \nabla_x
 (f(x(t)) + \mu_0 ^{\mathsmaller T} g(x(t))) \,dt$ for some $x(0) \in \mathbb{R}^d $
 and some $\mu_ 0 \in \mathbb{R}^k$. 
 \item[(II)] Almost surely, any limit point of 
 $\{\widetilde{\mu}(r + \cdotp) \ | \ r \ge 0\}$ in $C([0, \infty), \mathbb{R}^k)$ is of
 the form $\mu (t) = \mu (0) + \int_0 ^t \nu (z) \,dz$ for some measurable function
 $\nu$ with $\nu (t) \in G(\mu (t))$, $t \ge 0$ and 
 $G(\mu (t)) = 
 \overline{co} \left( \{ \nabla_{\mu}(f(x) + \mu (t) ^{\mathsmaller T} g(x)) \ | \ x \in \lambda _m (\mu (t)) \} \right)$. 
\end{itemize}
For the construction of $\overline{x}(\cdotp)$ and 
$\widetilde{\mu}(\cdotp)$ see equations (\ref{xbar}) and (\ref{ytilde}) respectively.
If in addition, (\ref{dualSA}) satisfies $(A6)$ 
\textit{i.e.,} $\exists \ A_{\mu} \subset \mathbb{R}^k$ such that it is the global
attractor of $\mu (t) \in G(\mu (t))$,
then it follows from Theorem~\ref{main} that: 
almost surely any accumulation point of $\{(x_n, y_n) \ |\ n \ge 0\}$
belongs to the set 
$\mathcal{A} := \underset{\mu \in A_m}{\cup} \{ (x, \mu) \ | \ x \in \lambda_m (\mu) \}$.
The attractor $A_{\mu}$ is the maximum set of $H(\mu) :=$
$\underset{x \in \mathbb{R}^d}{inf} \left( f(x) + \mu ^{\mathsmaller T} g(x) \right)$
subject to $\mu \ge 0$. 
It may be noted that $H$ is a concave function that is bounded above as a consequence
of strong duality. For any $(x^*, \mu^*) \in \mathcal{A}$ 
we have that
\begin{equation} \nonumber
 f(x^*) + (\mu^*)^{\mathsmaller T}g(x^*) = \underset{\substack{\mu \in \mathbb{R}^k \\ \mu \ge 0}}{sup}
 \\ \underset{x \in \mathbb{R}^d}{inf} f(x) + \mu^{\mathsmaller T}g(x).
\end{equation}
In other words, almost surely the two timescale iterates given by (\ref{dualSA})
converge to a solution of the dual (\ref{dual}). It follows from strong duality that
they almost surely converge to a solution of the primal (\ref{primal}).
%%%%%%%%%%%%%%%%%%%%%%%%%%%%%%%%%%%%%%%%%%%%%%%%%%%%%%%%%%%%%%%%%%%%%%%%%%%%%%%%%%%%%%%%%%%%%%
%%%%%%%%%%%%%%%%%%%%%%%%%%%%%%%%%%%%%%%%%%%%%%%%%%%%%%%%%%%%%%%%%%%%%%%%%%%%%%%%%%%%%%%%%%%%%%
%%%%%%%%%%%%%%%%%%%%%%%%%%%%%%%%%%%%%%%%%%%%%%%%%%%%%%%%%%%%%%%%%%%%%%%%%%%%%%%%%%%%%%%%%%%%%%
%%%%%%%%%%%%%%%%%%%%%%%%%%%%%%%%%%%%%%%%%%%%%%%%%%%%%%%%%%%%%%%%%%%%%%%%%%%%%%%%%%%%%%%%%%%%%%
%%%%%%%%%%%%%%%%%%%%%%%%%%%%%%%%%%%%%%%%%%%%%%%%%%%%%%%%%%%%%%%%%%%%%%%%%%%%%%%%%%%%%%%%%%%%%%
%%%%%%%%%%%%%%%%%%%%%%%%%%%%%%%%%%%%%%%%%%%%%%%%%%%%%%%%%%%%%%%%%%%%%%%%%%%%%%%%%%%%%%%%%%%%%%
%%%%%%%%%%%%%%%%%%%%%%%%%%%%%%%%%%%%%%%%%%%%%%%%%%%%%%%%%%%%%%%%%%%%%%%%%%%%%%%%%%%%%%%%%%%%%%
%%%%%%%%%%%%%%%%%%%%%%%%%%%%%%%%%%%%%%%%%%%%%%%%%%%%%%%%%%%%%%%%%%%%%%%%%%%%%%%%%%%%%%%%%%%%%%
%%%%%%%%%%%%%%%%%%%%%%%%%%%%%%%%%%%%%%%%%%%%%%%%%%%%%%%%%%%%%%%%%%%%%%%%%%%%%%%%%%%%%%%%%%%%%%
%%%%%%%%%%%%%%%%%%%%%%%%%%%%%%%%%%%%%%%%%%%%%%%%%%%%%%%%%%%%%%%%%%%%%%%%%%%%%%%%%%%%%%%%%%%%%%
%%%%%%%%%%%%%%%%%%%%%%%%%%%%%%%%%%%%%%%%%%%%%%%%%%%%%%%%%%%%%%%%%%%%%%%%%%%%%%%%%%%%%%%%%%%%%%
%%%%%%%%%%%%%%%%%%%%%%%%%%%%%%%%%%%%%%%%%%%%%%%%%%%%%%%%%%%%%%%%%%%%%%%%%%%%%%%%%%%%%%%%%%%%%%
%%%%%%%%%%%%%%%%%%%%%%%%%%%%%%%%%%%%%%%%%%%%%%%%%%%%%%%%%%%%%%%%%%%%%%%%%%%%%%%%%%%%%%%%%%%%%%
%%%%%%%%%%%%%%%%%%%%%%%%%%%%%%%%%%%%%%%%%%%%%%%%%%%%%%%%%%%%%%%%%%%%%%%%%%%%%%%%%%%%%%%%%%%%%%
%%%%%%%%%%%%%%%%%%%%%%%%%%%%%%%%%%%%%%%%%%%%%%%%%%%%%%%%%%%%%%%%%%%%%%%%%%%%%%%%%%%%%%%%%%%%%%
%%%%%%%%%%%%%%%%%%%%%%%%%%%%%%%%%%%%%%%%%%%%%%%%%%%%%%%%%%%%%%%%%%%%%%%%%%%%%%%%%%%%%%%%%%%%%%
\section{Conclusions}
In this paper we have presented a framework for the analysis of two timescale stochastic
approximation algorithms with set valued mean fields. Our framework generalizes the one by Perkins and Leslie. 
We note that the analysis of the faster timescale proceeds in a predictable manner
but the analysis of the slower timescale is new to the literature
to the best of our knowledge.
As an application we analyze the two timescale scheme that arises from 
the Lagrangian dual problem in optimization using our framework. Our framework is applicable even when
the minimum sets are not singleton. 
%%%%%%%%%%%%%%%%%%%%%%%%%%%%%%%%%%%%%%%%%%%%%%%%%%%%%%%%%%%%%%%%%%%%%%%%%%%%%%%%%
%%%%%%%%%%%%%%%%%%%%%%%%%%%%%%%%%%%%%%%%%%%%%%%%%%%%%%%%%%%%%%%%%%%%%%%%%%%%%%%%%
%%%%%%%%%%%%%%%%%%%%%%%%%%%%%%%%%%%%%%%%%%%%%%%%%%%%%%%%%%%%%%%%%%%%%%%%%%%%%%%%%
%%%%%%%%%%%%%%%%%%%%%%%%%%%%%%%%%%%%%%%%%%%%%%%%%%%%%%%%%%%%%%%%%%%%%%%%%%%%%%%%%
%%%%%%%%%%%%%%%%%%%%%%%%%%%%%%%%%%%%%%%%%%%%%%%%%%%%%%%%%%%%%%%%%%%%%%%%%%%%%%%%%
%%%%%%%%%%%%%%%%%%%%%%%%%%%%%%%%%%%%%%%%%%%%%%%%%%%%%%%%%%%%%%%%%%%%%%%%%%%%%%%%%
%%%%%%%%%%%%%%%%%%%%%%%%%%%%%%%%%%%%%%%%%%%%%%%%%%%%%%%%%%%%%%%%%%%%%%%%%%%%%%%%%

%%%%%%%%%%%%%%%%%%%%%%%%%%%%%%%%%%%%%%%%%%%%%%%%%%%%%%%%%%%%%%%%
%%%%%%%%%%%%%%%%%%%%%%%%%%%%%%%%%%%%%%%%%%%%%%%%%%%%%%%%%%%%%%%%
%%%%%%%%%%%%%%%%%%%%%%%%%%%%%%%%%%%%%%%%%%%%%%%%%%%%%%%%%%%%%%%%
%%%%%%%%%%%%%%%%%%%%%%%%%%%%%%%%%%%%%%%%%%%%%%%%%%%%%%%%%%%%%%%%
%%%%%%%%%%%%%%%%%%%%%%%%%%%%%%%%%%%%%%%%%%%%%%%%%%%%%%%%%%%%%%%%
%%%%%%%%%%%%%%%%%%%%%%%%%%%%%%%%%%%%%%%%%%%%%%%%%%%%%%%%%%%%%%%%
%%%%%%%%%%%%%%%%%%%%%%%%%%%%%%%%%%%%%%%%%%%%%%%%%%%%%%%%%%%%%%%%
\bibliographystyle{plain}
\bibliography{TTforDI}

\end{document}